\documentclass[10pt,a4paper]{amsart}


\usepackage{latexsym,amssymb,amsmath,amsthm,amsfonts,enumerate,verbatim,xspace,
exscale}
\usepackage{graphicx}
\usepackage{color,amsbsy}
 
\usepackage{hyperref}
\definecolor{darkblue}{rgb}{0,0,.5}
\hypersetup{colorlinks=true, breaklinks=true, linkcolor=darkblue, menucolor=darkblue,  urlcolor=darkblue,citecolor=darkblue}

\usepackage{setspace}



\parskip 1mm 

\addtolength{\textwidth}{42mm}  

\addtolength{\hoffset}{-22mm}  

\theoremstyle{plain}
\newtheorem{theorem}{Theorem}[section]

\newtheorem{proposition}[theorem]{Proposition}

\theoremstyle{definition}

\newtheorem{remark}[theorem]{Remark}

\setcounter{secnumdepth}{2}
\setcounter{tocdepth}{1}

\def\D{\mathcal{D}}

\def\R{\mathbb{R}}
\def\I{\mathbb{I}}

\def\F{\mathcal{F}}

\def\A{\mathcal{A}}
\def\P{\mathcal{P}}
\def\G{\mathcal{G}}

\def\JG{J_{\mathcal{G}}}
\def\JD{J_{\mathcal{D}}}
\def\JP{J_{\mathcal{P}}}


\newcommand{\up}{\upshape}




\def\vv<#1>{\langle#1\rangle}
\def\ww<#1>{\langle\langle#1\rangle\rangle}

\newcommand{\diag}[1]{\mbox{$\textup{diag}(#1)$}}

\providecommand{\del}{\partial}

\newcommand{\om}{\omega}
\newcommand{\Om}{\Omega}

\newcommand{\lam}{\lambda}


\newcommand{\by}[2]{\mbox{$\frac{#1}{#2}$}}
\newcommand{\cinf}{\mbox{$C^{\infty}$}}


\newcommand{\du}{\mathfrak{d}}

\newcommand{\gu}{\mathfrak{g}}

\newcommand{\po}{\mathfrak{p}}

\newcommand{\Ad}{\mbox{$\text{\upshape{Ad}}$}}
\newcommand{\ad}{\mbox{$\text{\upshape{ad}}$}}

\newcommand{\SO}{\mbox{$\textup{SO}$}}
\newcommand{\so}{\mbox{$\mathfrak{so}$}}

\newcommand{\X}{\mbox{$\mathcal{X}$}}


\newcommand{\revise}[1]{#1}


\title[ ]{%
Nonlinear feedback, double bracket dissipation and port control of Lie-Poisson systems}

\author{Simon Hochgerner}
\address{Austrian Financial Market Authority (FMA),
Otto-Wagner Platz 5, A-1090 Vienna
}
\email{simon.hochgerner@fma.gv.at} 


\begin{document}

\begin{abstract}
Methods from controlled Lagrangians, double bracket dissipation and  interconnection and damping assignment -- passivity based control (IDA-PBC) are used to construct nonlinear feedback controls which (asymptotically) stabilize  previously unstable equilibria of Lie-Poisson Hamiltonian systems. The results are applied to find an asymptotically stabilizing control for the rotor driven satellite, and a stabilizing control for Hall magnetohydrodynamic flow.
\end{abstract}

\maketitle


\section{Introduction}\label{sec0:intro}
This paper is concerned with two strands of feedback control: the theory of controlled Lagrangians initiated in \cite{K85} and further elaborated by \cite{BLM01a,BLM01b}, and the method of interconnection and damping assignment -- passivity based control (IDA-PBC) developed by \cite{OSME02,OG04,BCO16} in the context of port controlled Hamiltonian systems. The relation between these approaches has been studied in \cite{BOS02,CBLMW02}, and it has been shown, in the framework of almost Poisson structures, that these techniques are equivalent when regarded in sufficient generality. 

The controlled Lagrangian/Hamiltonian approach was first formulated in the framework of Hamiltonian mechanics (e.g., \cite{BMS97}) and later studied systematically from the Lagrangian point of view (\cite{BLM01a,BLM01b}). Both perspectives are equivalent (for the kinetic energy functions considered in this paper) and will be referred to collectively as the controlled Lagrangian (CL) approach. 

The CL method is particularly well suited to symmetry arguments and the tools of geometric mechanics (\cite{WK92,BKMS92,BMS97,BCLMW00,CM04}). 
The original idea of CL is to start from a mechanical system defined by a kinetic energy function of Kaluza-Klein type and to modify this structure (kinetic energy shaping) so that the corresponding equations of motion have force terms only in a given symmetry direction. In \cite{H19,H20} this procedure has been reversed: associated to a given Lie group action is a force acting in symmetry directions  and the goal is to reformulate this as a mechanical system associated to a new kinetic energy function. When such a function can be found, these approaches are clearly equivalent. However, as shown presently, more general forces along symmetry directions may be considered and this leads to not only a new energy function but also a new almost Poisson structure and a new symmetric tensor responsible for energy dissipation. This fits the framework of IDA-PBC, and has the additional feature of being adapted to the symmetry methods of geometric mechanics. 

This paper deals with Kaluza-Klein Hamiltonian systems in the Lie-Poisson setting. This implies that the phase space of the mechanical system is  a direct product dual Lie algebra $\po^*=\du^*\times\gu^*$ where $\du$ and $\gu$ are Lie algebras. Both, $\du$ and $\gu$, may be finite or infinite dimensional and non-Abelian. The component $\gu^*$ represents the symmetry directions where the actuating force is applied.
The contributions, which are new to the best of the author's knowledge, are the following: 
\begin{enumerate}
    \item 
    In Proposition~\ref{prop:matching}   matching conditions are derived to characterize the relationship between symmetry actuation and modified Kaluza-Klein Hamiltonians. These generalize the Euler-Poincar\'e matching conditions of \cite{BLM01b} to the non-Abelian and coordinate independent setting (Section~\ref{sec:EPcond}).  
    \item
    Section~\ref{sec:ConDis} introduces a nonlinear feedback control based on double bracket dissipation (\cite{BKMR96}). This yields controlled energy dissipation. In Theorem~\ref{thm:FB_IDA} this construction is recognized as an IDA-PBC system and the corresponding structure consisting of a new Hamiltonian, a new almost Poisson tensor and a new symmetric tensor is given explicitly. This method is fully constructive, and depends neither on the matching conditions of CL theory (\cite{BLM01a,BLM01b}) nor on the solution of a PDE (\cite{OSME02}). In fact, the Hamiltonian found in this manner is of fourth order and does not satisfy the matching conditions (i.e.\ Proposition~\ref{prop:matching}). 
    \item 
    Theorem~\ref{thm:FB_IDA} is applied in Section~\ref{sec2:sat} to asymptotically stabilize the satellite with a rotor in the full phase space. This conclusion is the same as in \cite{BCLMW00} but reached through a different method. 
    \item
    Section~\ref{Sec3} contains an application of Theorem~\ref{thm:FB_IDA} to Hall magnetohydrodynamic (MHD) flow. This is used in Section~\ref{Sec4:shear} to stabilize a previously unstable equilibrium of planar shear flow. As in Section~\ref{sec2:sat}, stability is achieved in the full phase space of fluid velocity and magnetic field variables, but asymptotic stability cannot be concluded in this case. 
\end{enumerate}

The main idea underlying these points is that the under-actuation associated to the the given symmetry group direction and the double bracket construction of \cite{BKMR96} give rise to a new conserved quantity when the dynamics are considered as a mechanical system on the cotangent bundle $T^*\mathcal{P}$, where $\mathcal{P}$ is the Lie group of $\po$. This in turn leads to a nonlinear coordinate transformation $\Phi: \po^*\to\po^*$, $(\nu,\alpha)\mapsto(\nu,\beta)$. In the new coordinates, there is a natural candidate, $g_C$, for a modified kinetic energy function. Therefore, the IDA-PBC problem can be solved algebraically, 
and we obtain a controlled interconnection and damping structure $\Pi_C$ and $\mathcal{R}_C$. This structure can now be transformed to the original coordinates via pullback to yield the desired structure: the fourth order Hamiltonian function $h_d=\Phi^*g_C$, the almost Poisson tensor $\Pi_d=\Phi^*\Pi_C$ and the symmetric tensor $\mathcal{R}_d = \Phi^*\mathcal{R}_C$, cf.\ Theorem~\ref{thm:FB_IDA}. This construction relies on the controlled conserved quantity \revise{$J_{\mathcal{G}}+j$} described in Proposition~\ref{prop:Cons}, it is therefore a synthesis of CL and the approach via canonical transformations to algebraic solutions of IDA-PBC outlined in \cite{FS01}. 

Moreover, the closed-loop dynamics, associated to the feedback control, appear in the $(\nu,\beta)$-variables in a particularly appealing format: the double bracket dissipation acts on the $\nu$-variables such that coadjoint orbits in $\du^*$ are preserved; on the other hand, in $\beta$ it is desirable to asymptotically restore the origin, corresponding to the closed loop $\alpha =  -C\nu - N(\nu)$, and accordingly decay of coadjoint orbits in $\gu^*$ is incorporated; the relevant control force is specified in terms of a linear map $C: \du^*\to\gu^*$ leading to a new inner product on $\du^*$, and a nonlinear map $N: \du^*\to\gu^*$ encoding the double bracket structure.

Other recent approaches to algebraic IDA-PBC include \cite{BCO16} and \cite{MSKS17}. The latter also introduces the notion of geometric PBC, but assumes a finite dimensional set-up and that the symmetry group action is Abelian. The method proposed in this paper differs from these approaches in the following ways: the coadjoint orbit structure in $\du^*$ is used to obtain double bracket dissipation; the controlled momentum map yields a nonlinear diffeomorphism such that the transformed closed-loop system can be  stabilized in the full phase space (non-actuated and actuated variables) by means of the energy-Casimir method; non-Abelian symmetry groups are allowed; and the construction is coordinate independent so that it may be applied also to infinite dimensional problems such as Hall MHD flow.  

This paper is restricted to mechanical systems of Lie-Poisson type which includes a large class of systems (\cite{BLM01a,BLM01b,HSS09,MRW84}). However, because of independence from (Euler-Poincar\'e or Lie-Poisson) matching conditions and since the controlled momentum map exists for symmetry actuation on any cotangent bundle (Proposition~\ref{prop:Cons}), the construction in Theorem~\ref{thm:FB_IDA} may carry over to more general phase spaces $T^*\mathcal{P}$ where $\mathcal{P}\to\mathcal{D}$ is a principal fiber bundle with structure group $\mathcal{G}$.

\section{Feedback control of Lie-Poisson systems}\label{sec1:feedback}

\subsection{Lie-Poisson systems}
Let $\D$ and $\G$ be finite or infinite dimensional Lie groups and consider the direct product Lie group $\P := \D\times\G$. In the following $\D$ will be thought of as the base and $\G$ as the symmetry group so that the projection $\P\to\D$ is a principal fiber bundle with structure group $\G$. 

For example, in Section~\ref{sec2:sat}, $\D$ will correspond to $\SO(3)$ describing the satellite's orientation and $\G$ will coincide with $S^1$ to represent the rotor's angle\revise{, cf.\ also Remark~\ref{rem:sat1} below.} 

Tangent and cotangent bundles can be represented via left or right multiplication as $T\P\cong \P\times\po = \D\times\G\times\du\times\gu$ and $T^*\P \cong \P\times\po^* = \D\times\G\times\du^*\times\gu^*$ where $\po$, $\du$, $\gu$ denote the Lie algebras associated to $\P$, $\D$, $\G$ respectively. The corresponding dual Lie algebras, $\po^*$, $\du^*$, $\gu^*$, are defined as smooth duals (also called: the regular part) such that there exist  isomorphisms   $\po\cong\po^*$, $\du\cong\du^*$, $\gu\cong\gu^*$, even in the infinite dimensional case (\cite{Michor06}). The left trivialization is called body representation (often used for examples involving rigid bodies), and the right trivialization is called space representation (often used for fluid dynamical examples). See \cite{HSS09}. 

From now on it is assumed that one such representation, left or right, has been chosen. Whenever the representation induces a specific sign this will be marked by $\pm$ where $+$ corresponds to `left' and $-$ to `right', and vice versa for $\mp$. Thus we can identify $T^*\P=\P\times\po^*$ and elements $\Pi\in T^*\P$ are written as $\Pi=(\phi,g,\nu,\alpha)$. 

Let $M^{\pm}$ denote left (or right) multiplication in $\G$ and $\D$, i.e.\ $M^+_{\psi}\phi = \psi\phi$ and $M^-_{\psi}\phi = \phi\psi$ in $\G$. The variational derivative, $\delta f/\delta\nu\in\du$, of a function $f: \du^*\to\R$ is defined by the dual pairing
\[
 \left<df(\nu), \eta \right>_{\mathfrak{d}^{**}\times\mathfrak{d}^*} 
 = \frac{\del}{\del t}\Big|_0 f(\nu+t\eta)
 = \left<\eta, \frac{\delta f}{\delta\nu}\right>_{\mathfrak{d}^{*}\times\mathfrak{d}} 
\]
where $\nu,\eta\in\du^*$, $df$ is the exterior derivative (full differential) and the source space of the pairing is indicated. The variational derivative of the pull back $\Phi^*f = f\circ\Phi$ along a function $\Phi:\du^*\to\du^*$ is given by 
\begin{equation}
    \label{eS1:pullb}
    \frac{\delta\Phi^*f}{\delta\nu}
    = (T\Phi)^*\frac{\delta f}{\delta\Phi(\nu)}. 
\end{equation}

The conjugation by left and right multiplication in $\G$ gives rise to the adjoint action $\Ad(g):\gu\to\gu$. Indeed, let $\textup{Conj}_g h = ghg^{-1}$ for $g,h\in\G$. Then $\Ad(g) := T_e\textup{Conj}_g: \gu\to\gu$, and the coadjoint action $\Ad(g)^*: \gu^*\to\gu^*$ is given by $\Ad(g)^*\alpha = \alpha\circ\Ad(g)$. Moreover, we define $\ad(X)Y = \del \Ad(\textup{exp}(tX))Y/\del t|_0$, and $\ad(X)^*\alpha = \alpha\circ\ad(X)$. The corresponding operations associated to $\P$ and $\D$ will be denoted by the same symbols. In the infinite dimensional case these operations have to be restricted to their appropriate domains of existence (\cite{Michor06}). 

On the cotangent bundle  $T^*\P$ there exists a canonical exact symplectic form, $\Om^{T^*\mathcal{P}}$. Let $h: T^*\P\to\R$ be a Hamiltonian function which is left (or right) invariant with respect to multiplication, $M^{\pm}$, of $\P$ on itself. The corresponding equations of motion, in left (or right) representation, are given by 
\begin{align}
\label{e:eom}
 \Big(
  TM^{\pm}_{\phi^{-1}}\dot{\phi} , TM^{\pm}_{g^{-1}}\dot{g} , \dot{\nu},\dot{\alpha}
 \Big)
    = 
    X_h(\Pi)
    =
    \Big(
    \by{\delta h}{\delta\nu},\by{\delta h}{\delta\alpha},
    \pm\ad(\by{\delta h}{\delta\nu})^*\nu, \pm\ad(\by{\delta h}{\delta\alpha})^*\alpha
    \Big)
\end{align}
where $X_h = (\Om^{T^*\mathcal{P}})^{-1}dh$ is the Hamiltonian vector field,
$\Pi = (\phi,g,\nu,\alpha)\in \D\times\G\times\du^*\times\gu^*$. 
The canonical momentum map $\JP: T^*P\to\po^*$ corresponding to left (or right) multiplication is:
\[
 \JP = (\JD , \JG):
 T^*\P\to\du^*\times\gu^*,\qquad
 (\phi,g,\nu,\alpha)\mapsto
 (\Ad(\phi^{\mp})^*\nu, \Ad(g^{\mp})^*\alpha )
\]
See \cite{MRW84}. 

For the remainder of this section, it is assumed that $h:T^*\P\to\R$ is left (or right) invariant.

\begin{remark}[Satellite with a rotor]\label{rem:sat1}
The prime example for the theory developed in this section is that of the rigid body with a rotor attached to one of the principal axis. This example is treated in detail in Section~\ref{sec2:sat} where bibliographical references are provided. However, to make the abstract formulation of the Lie-Poisson setting more accessible, the main points will be sketched for the satellite example also in this section. For this example, $\D=\SO(3)$ corresponds to the rigid body and $\G=S^1$ corresponds to the rotor attached to the short axis. We use the body representation whence the top sign is the relevant one. The Lie algebra structure on $\du=\so(3)$, which is identified with $\R^3$, is given by the cross product, i.e.\ $\ad(u)v = u\times v$, and the bracket on $\gu=\R$ is the Abelian, i.e.\ trivial, one. Thus the equations of motion~\eqref{e:eom} for  a curve $(\nu_t,\alpha_t)\in\R^3\times\R$ are given by  $\dot{\nu} = -\frac{\delta h}{\delta \nu}\times \nu$ and $\dot{\alpha}=0$. 
\end{remark}

\subsection{Controlled equations of motion} 
Let $G: \gu^*\to\gu^*$ be an isomorphism, which is assumed to be linear but not necessarily equivariant. Let $C: \du^*\to\gu^*$ be a linear map. Let $N: \du^*\to\gu^*$ be a smooth map, which may be nonlinear. 
Following the construction in \cite{H20}, we define a map $j: T^*\P\to\gu^*$ through
\begin{equation}
    \label{e:def_j}
    j(\phi,g,\nu,\alpha)
    = 
    \Ad(g^{\mp})^* G^{-1} \Big( C\nu + N(\nu) + (1-G)\alpha \Big).
\end{equation}
Associated to $j$ is a control force $f: T^*\P\to\gu^*$ via
\begin{equation}
    \label{e:def_f}
    f(\Pi)
    = -\Ad(g^{\mp})^*\,G\, \Ad(g^{\pm})^* \, dj \, X_h(\Pi)
\end{equation}
and a corresponding closed-loop control (in the terminology of \cite[Def.~3.1]{CBLMW02}) $\mathcal{U}: T^*\P\to T^*\P$ as
\begin{equation}
    \label{e:def_U}
    \mathcal{U}(\Pi)
    = \Big(\phi,g,0,\Ad(g^{\pm})^*f(\Pi)\Big). 
\end{equation}
The control force $F\in\X(T^*\P)$, which is a vector field on $T^*\P$, is now defined as the vertical lift of $\mathcal{U}$ at $\Pi$, that is
\begin{equation}
    \label{e:def_F}
    F(\Pi)
    := 
    \frac{\del}{\del s}\Big|_{s=0}\Big(\Pi + s\, \mathcal{U}(\Pi)\Big)
    = 
    \Big(0,0,0,\Ad(g^{\pm})^* f(\Pi)\Big)
    \in T_{\Pi}T^*\P 
\end{equation}
and the controlled equations of motion are
\begin{equation}
    \label{e:ceom1}
    \dot{\Pi}
    = X_h(\Pi) + F(\Pi).  
\end{equation}

\begin{proposition}\label{prop:Cons}
The following are true.
\begin{enumerate}[\up (1)]
    \item 
    $F$ is vertical with respect to the cotangent projection $\tau: T^*\P\to \P$, i.e.\ $T\tau\, F = 0$. 
    \item 
    $d\JD\,F=0$ and $d\JG\,F = f$.
    \item 
    The momentum map $\JD$ and the controlled momentum map $\JG+j$ are conserved along solutions of equation~\eqref{e:ceom1}: 
    \begin{align}
        d\JD\,(X_h + F) &= 0 \label{e:consJM} \\
        d(\JG+j)\,(X_h + F) &= 0 \label{e:consJGj}
    \end{align}
\end{enumerate}
\end{proposition}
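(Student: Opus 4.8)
The plan is to handle the three items in order, noting that (1) and (2) are formal consequences of the definitions while the real content is the second conservation law in (3). The single identity I would isolate and reuse throughout is the cancellation $\Ad(g^{\pm})^*\Ad(g^{\mp})^*=\id=\Ad(g^{\mp})^*\Ad(g^{\pm})^*$, valid because $g^{\pm}$ and $g^{\mp}$ are inverse powers of $g$ and $\Ad$ is a group homomorphism (so it holds under either sign convention). Given this, (1) is immediate from the explicit form $F(\Pi)=(0,0,0,\Ad(g^{\pm})^*f(\Pi))$ in \eqref{e:def_F}: the $\phi$- and $g$-components vanish, hence $T\tau\,F=0$. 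For (2), since $\JD=\Ad(\phi^{\mp})^*\nu$ depends only on $(\phi,\nu)$ and $F$ moves neither of these, $d\JD\,F=0$; and differentiating $\JG=\Ad(g^{\mp})^*\alpha$ along the $\alpha$-motion $t\mapsto\alpha+t\,\Ad(g^{\pm})^*f$ with $g$ fixed gives
\[
 d\JG\,F=\Ad(g^{\mp})^*\Ad(g^{\pm})^*f=f .
\]

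For the first conservation law in (3), I would combine $d\JD\,F=0$ from (2) with $d\JD\,X_h=0$. The latter is Noether's theorem: since $h$ is $\P$-invariant, the canonical momentum map $\JP=(\JD,\JG)$ is conserved along $X_h$, so both $d\JD\,X_h=0$ and $d\JG\,X_h=0$. Linearity then gives $d\JD\,(X_h+F)=0$, which is \eqref{e:consJM}.

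The substantive step is \eqref{e:consJGj}. Using $d\JG\,X_h=0$ and $d\JG\,F=f$ just established, the claim $d(\JG+j)(X_h+F)=0$ reduces to the purely algebraic identity $dj\,F=-f-dj\,X_h$. To evaluate $dj\,F$ I use that $F$ moves only $\alpha$ and that $j$ in \eqref{e:def_j} is affine in $\alpha$ with slope $\Ad(g^{\mp})^*G^{-1}(1-G)=\Ad(g^{\mp})^*(G^{-1}-1)$, so
\[
 dj\,F=\Ad(g^{\mp})^*(G^{-1}-1)\Ad(g^{\pm})^*f=\Ad(g^{\mp})^*G^{-1}\Ad(g^{\pm})^*f-f ,
\]
the last term coming from the cancellation applied to the $-1$ summand. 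To identify the remaining term I invert the definition \eqref{e:def_f} of $f$: applying the cancellation twice (once on the inner pair, once on the outer pair) yields $\Ad(g^{\mp})^*G^{-1}\Ad(g^{\pm})^*f=-\,dj\,X_h$, so that $dj\,F=-dj\,X_h-f$, exactly as needed.

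The point I would be most careful about — the would-be obstacle — is resisting the temptation to expand $dj\,X_h$ explicitly. Doing so drags in the $g$-, $\nu$- and $\alpha$-components of $X_h$ from \eqref{e:eom}, together with the $\ad$-terms and the Jacobian $dN(\nu)$, producing a long expression whose cancellation is opaque; I have checked that this brute-force route does close, but only after the $G^{-1}(C+dN(\nu))\dot\nu$ contributions cancel and the remaining $\ad(\delta h/\delta\alpha)^*\alpha$ terms collapse using $W_{\JG+j}=W_{j}+\alpha$. The clean route instead keeps $dj\,X_h$ unevaluated and observes that \eqref{e:def_f} was engineered precisely so the identity is automatic: the only analytic input is Noether's theorem, and the only algebra is the repeated cancellation $\Ad(g^{\mp})^*\Ad(g^{\pm})^*=\id$. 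One caveat I would flag explicitly is that $G$ is not assumed equivariant, so it need not commute with $\Ad(g)^*$; I would therefore keep the operators $\Ad(g^{\mp})^*$, $G^{\pm1}$, $\Ad(g^{\pm})^*$ in fixed order and cancel only adjacent matched $\Ad$-pairs.
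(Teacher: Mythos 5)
Your argument is correct and is essentially the paper's own proof: items (1)–(2) from the explicit vertical-lift form of $F$, \eqref{e:consJM} from invariance of $h$, and \eqref{e:consJGj} by differentiating $j$ along the $\alpha$-direction of $F$ (using that $j$ is affine in $\alpha$ with slope $\Ad(g^{\mp})^*G^{-1}(1-G)$) and then cancelling adjacent $\Ad(g^{\mp})^*\Ad(g^{\pm})^*$ pairs against the definition \eqref{e:def_f} of $f$ to get $dj\,F=-dj\,X_h-f$. Your remark that $dj\,X_h$ should be kept unevaluated, and that $G$ need not commute with $\Ad(g)^*$ so the operator order must be respected, matches the structure of the paper's calculation exactly.
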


\begin{proof}
Items (1) and (2) follow because $F$ is defined as a vertical lift of $\mathcal{U}$. Equation~\eqref{e:consJM} is $d\JD\, X_h = 0$ which holds since $h$ is, in particular, $\D$-invariant. Equation~\eqref{e:consJGj} is shown in \cite{H20} under the assumption of $j: T^*\P\to\gu^*$ being fiber linear. However, the calculation extends to the present case (because linearity of $G$ suffices): 
\begin{align*}
    dj\, F(\Pi)
    &= 
    \frac{\del}{\del s}\Big|_{s=0}\,j\Big(\Pi + s\,\mathcal{U}(\Pi)\Big)
    = 
    \frac{\del}{\del s}\Big|_{s=0}\,
    \Ad(g^{\mp})^*G^{-1}
    \Big( C\nu+N(\nu) + (1-G)(\alpha+s\,\Ad(g^{\pm})^*\,f) \Big)
    \\
    &= 
    -dj\,X_h(\Pi) + \Ad(g^{\mp})^*G\Ad(g^{\pm})^* dj\, X_h(\Pi) 
    = 
    -dj\,X_h(\Pi) - f(\Pi) 
\end{align*}
whence
$
    d(\JG+j)\,(X_h + F)
    = 
    0 + f + dj\, X_h + dj\,F = 0
$.
\end{proof}

To express \eqref{e:ceom1} in the Lie-Poisson setting, note that \eqref{e:eom} yields 
\begin{align*}
    dj\, X_h(\Pi)
    &= 
    \mp\Ad(g^{\mp})^*\Big(
     \ad(X)^*G^{-1}(C\nu + N(\nu) + \alpha) 
     - G^{-1}C\,\ad(u)^*\nu-G^{-1}\,dN(\nu)\,\ad(u)^*\nu 
      - G^{-1}\ad(X)^*\alpha 
     \Big) 
\end{align*}
where $u=\delta h/\delta\nu$ and $X=\delta h/\delta\alpha$. 
The corresponding force in the Lie-Poisson formulation  is
\begin{align}
\label{eS1:tildeU} 
    \mathcal{U}_{\textup{LP}} (\Pi)
    &:= 
    \Ad(g^{\pm})^*f(\Pi) 
    = -G\Ad(g^{\pm})^*\,dj\,X_h(\Pi) \\ 
    &=
    \notag 
    \pm\Big(
     G\,\ad(X)^*G^{-1}(C\nu + N(\nu) + \alpha)  
     - C\,\ad(u)^*\nu - dN(\nu)\,\ad(u)^*\nu 
     - \ad(X)^*\alpha 
    \Big) 
\end{align}
which is independent of the base point, $\mathcal{U}_{\textup{LP}} (\Pi) = \mathcal{U}_{\textup{LP}} (\nu,\alpha)$, as it should be in order for \eqref{e:ceom1} to be left (or right) invariant.  Hence \eqref{e:ceom1} may be expressed equivalently in the left (or right) representation as 
\begin{align}
    \label{eS1:ceom1LP} 
    \dot{\nu} 
    &= \pm\ad(u)^*\nu  \\ 
    \dot{\alpha} 
    &= \pm\ad(X)^*\alpha + \mathcal{U}_{\textup{LP}} (\nu,\alpha)
    = \pm\Big( G\,\ad(X)^*G^{-1}(C\nu + N(\nu) + \alpha)  
     - C\,\ad(u)^*\nu - dN(\nu)\,\ad(u)^*\nu \Big) 
 \label{eS1:ceom2LP}
\end{align}
where $u = \delta h/\delta\nu = TM_{\phi^{-1}}\dot{\phi}$ and $X=\delta h/\delta\alpha = TM_{g^{-1}}\dot{g}$,  and $M$ is left (or right) multiplication in the group.

\begin{remark}[Satellite with a rotor]\label{rem:sat2}
Continuing the example from Remark~\ref{rem:sat1}, equation~\eqref{eS1:ceom1LP} remains unchanged as $\dot{\nu} = \delta h/\delta\nu \times\nu$. The equation for $\alpha$ now becomes the controlled equation of motion $\dot{\alpha} = C\,u\times\nu + dN(\nu)\,u\times\nu$ where $C$ is a $1\times 3$ matrix and $N:\R^3\to\R$ is a smooth map.
\end{remark}

\subsection{Closed loop equations}
let $\beta := G^{-1}(C\nu + N(\nu) + \alpha)$. 
Equation~\eqref{e:consJGj} implies that the relation 
\begin{equation}
    \label{e:feedback}
    (\JG+j)(\Pi)
    = \Ad(g^{\mp})^*G^{-1}\Big(C\nu + N(\nu) + \alpha\Big) 
    = \Ad(g^{\mp})^*\beta 
    = \beta_0
\end{equation}
is conserved along solutions of \eqref{e:ceom1} where $\beta_0\in\gu^*$ is constant. 
This gives an expression for $\alpha = G\beta - C\nu - N(\nu)$ as a function of $\nu$ and $\beta$. Feeding this expression back into $\delta h/\delta \nu$ and $\delta h/\delta\alpha$, for all occurrences of $\alpha$, yields 
\begin{align}
    \label{e:ceom2}
    \dot{\nu} &= \pm\ad(u)^*\nu, \\
    \label{e:ceom3}
    \dot{\beta} &= \pm\ad(X)^*\beta 
\end{align}
where $u = \delta h/\delta\nu = TM_{\phi^{-1}}\dot{\phi}$ and $X=\delta h/\delta\alpha = TM_{g^{-1}}\dot{g}$ are now expressed in terms of $\nu$ and $\beta$. Equations~\eqref{e:ceom2}-\eqref{e:ceom3} are referred to as the closed-loop equations associated to the feedback control \eqref{e:ceom1} or, equivalently in Lie-Poisson form, \eqref{eS1:ceom1LP}-\eqref{eS1:ceom2LP}.   

\begin{remark}
If $u$ and $X$ can be written as functional derivatives with respect to a function depending on $\nu$ and $\beta$ then \eqref{e:ceom2}-\eqref{e:ceom3} are in Hamiltonian form, and in fact even Lie-Poisson. But this is not always possible. Proposition~\ref{prop:matching} below gives conditions under which such a Hamiltonian, $h_C$, can be found for the case $N=0$. Specific choices of $N\neq0$ can then be used to add dissipation to the the Hamiltonian system defined by $h_C$. 
\end{remark} 

Following \cite{BLM01a,BLM01b,BKMS92,BMS97}, we assume from now on that the Hamiltonian $h_0$ is not only left (or right) $\P$-invariant but also of Kaluza-Klein type. This means that $h_0$ is the kinetic energy of a $\P$-left- (or right-) invariant Kaluza-Klein metric $\mu_0^{\mathcal{P}}$, i.e.\ $h_0(\Pi) = \vv<(\nu,\alpha)^{\top}, (\mu_0^{\mathcal{P}})^{-1}(\nu,\alpha)^{\top} >/2$ where $\vv<.,.>$ is the duality pairing.   
Thus $\mu_0^{\mathcal{P}}$ is assumed to be a Kaluza-Klein metric on the principal fiber bundle $\P\to\D$ where the symmetry group $\G$ acts by left (or right) multiplication on itself. The metric $\mu_0^{\mathcal{P}}$ is determined by the following data: 
\begin{enumerate}[\up (1)]
    \item
    A Riemannian metric on $\D$ which is invariant with respect to the left (or right) multiplication of $\D$ on itself. Hence the metric is determined by an isomorphism $\mu_0: \du\to\du^*$ such that the bilinear form $\vv<\mu_0 .,.>$ is symmetric and positive definite.  
    \item 
    An isomorphism $\I_0: \gu\to\gu^*$ such that the associated bilinear form on $\gu$ is symmetric and positive definite.
    \item 
    A principal bundle connection $T\P\to\gu$, $(\phi,g,u,X)\mapsto \Ad(g^{\pm})(X + \A_0u)$ determined by a linear map $\A_0: \du\to\gu$.
\end{enumerate}
Identifying the metric with the induced isomorphism to the (regular) dual, 
this means that $\mu_0^{\mathcal{P}}: \po\to\po^*$ is given in matrix form: 
\begin{equation}
\label{eS1:KK} 
 \mu_0^{\mathcal{P}}
 =
 \left(
 \begin{matrix}
  \mu_0+\A_0^*\I_0\A_0 & \A_0^*\I_0 \\ 
  \I_0\A_0  & \I_0 
 \end{matrix}
 \right) 
\end{equation}
This entails
\begin{equation}
    \label{e:uX}
    \Big(\frac{\delta h_0}{\delta\nu}, \frac{\delta h_0}{\delta\alpha}\Big)
    = \Big(u, X\Big)
    = (\mu_0^{\mathcal{P}})^{-1} \Big(\nu,\alpha\Big) 
    = \Big( \mu_0^{-1}(\nu-\A_0^*\alpha) , \I_0^{-1}\alpha-\A_0 u \Big) .
\end{equation}

\begin{proposition}\label{prop:matching}
Assume that $N=0$ and let $h_0$ be the Hamiltonian associated to the Kaluza-Klein metric \eqref{eS1:KK}. Then, by virtue of $\beta = G^{-1}(C\nu+\alpha)$, equation~\eqref{e:ceom1} is equivalent to \eqref{e:ceom2}-\eqref{e:ceom3} where
\begin{align*}
    u 
    &= \mu_0^{-1}(1+\A_0^*C)\nu - \mu_0^{-1}\A_0^*G\beta \\ 
    X
    &= \I_0^{-1}G\beta + \A_0\mu_0^{-1}\A_0^*G\beta 
      - \I_0C\nu - \A_0\mu_0^{-1}(1+\A_0^*C)\nu
\end{align*}
Moreover, the following are equivalent:
\begin{enumerate}
    \item 
    Equations~\eqref{e:ceom2}-\eqref{e:ceom3} are Lie-Poisson on the direct product $\po^*=\du^*\times\gu^*$ with respect to a kinetic energy Hamiltonian associated to a Kaluza-Klein inner product $\mu_C^{\mathcal{P}}$ on $\po$.
    \item
    The following conditions are satisfied: 
    \begin{itemize}
    \item
    $1+\A_0^*C: \du^*\to\du^*$ is invertible and $\mu_C := (1+\A_0^*C)^{-1}\mu_0$ defines an inner product on $\du$. 
        \item 
        $G$ satisfies $\A_0^*G = \mu_0\mu_C^{-1}\A_C^*$ where $\A_C := \A_0+\I_0^{-1} C \mu_C$.
        \item 
        $G-C\A_C^*$ is invertible and  $\I_C := (G-C\A_C^*)^{-1}\I_0$ defines an inner product on $\gu$.
    \end{itemize}
    In this case,  $\mu_C^{\mathcal{P}}$ corresponds to $\mu_C$, $\I_C$ and $\A_C$ as in \eqref{eS1:KK}, and 
    the Hamiltonian is $h_C: \du^*\times\gu^*\to\R$, $(\nu,\beta)\mapsto\vv<(\nu,\beta)^{\top},(\mu_C^{\mathcal{P}})^{-1}(\nu,\beta)^{\top}>/2$. 
    \item 
There exist inner products $\mu_C$, $\I_C$ on $\du$, $\gu$, respectively, and a linear map $\A_C: \du\to\gu$ such that the following conditions are satisfied: 
\begin{align}
    \tag{LP1}\label{LP1}
    \I_C\A_C 
    &= \I_0\A_0 
    \\ 
    \tag{LP2}\label{LP2}
    \mu_C+\A_C^*\I_C\A_C
    &= \mu_0+\A_0^*\I_0\A_0
\end{align}
In this case,  $\mu_C^{\mathcal{P}}$ corresponds to $\mu_C$, $\I_C$ and $\A_C$, as in \eqref{eS1:KK}, and 
the Hamiltonian is $h_C: \du^*\times\gu^*\to\R$, $(\nu,\beta)\mapsto\vv<(\nu,\beta)^{\top},(\mu_C^{\mathcal{P}})^{-1}(\nu,\beta)^{\top}>/2$. Moreover, $C$ and $G$ are recovered by $C = \I_0 (\A_C-\A_0) \mu_C^{-1}$ and $G = \I_0\,\I_C^{-1} + C\,\A_C^*$. 
\end{enumerate}
\end{proposition}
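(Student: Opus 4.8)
The plan is to handle the proposition in two stages. First I establish the explicit formulas for $u$ and $X$: with $N=0$ the conserved quantity \eqref{e:feedback} gives $\alpha = G\beta - C\nu$, so I substitute this into \eqref{e:uX}. Collecting the coefficients of $\nu$ and $\beta$ in $u=\mu_0^{-1}(\nu-\A_0^*\alpha)$ produces $u=\mu_0^{-1}(1+\A_0^*C)\nu-\mu_0^{-1}\A_0^*G\beta$, and feeding this into $X=\I_0^{-1}\alpha-\A_0 u$ gives the stated expression for $X$. This step is purely mechanical. The second stage is the triangle of equivalences $(1)\Leftrightarrow(2)\Leftrightarrow(3)$. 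The key structural observation that makes everything tractable is that, because $h_0$ and the candidate $h_C$ are Kaluza--Klein kinetic energies, $u$, $X$, $\delta h_C/\delta\nu$ and $\delta h_C/\delta\beta$ are all affine-linear in $(\nu,\beta)$; hence every equivalence reduces to matching linear maps among $\du,\du^*,\gu,\gu^*$ and manipulating their adjoints, using throughout that $\mu_0,\mu_C,\I_0,\I_C$ are self-adjoint.

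For $(1)\Leftrightarrow(2)$ I read condition (1), following the remark preceding the proposition, as the pair of pointwise identities $u=\delta h_C/\delta\nu$ and $X=\delta h_C/\delta\beta$ for the Kaluza--Klein Hamiltonian $h_C$. Applying the analogue of \eqref{e:uX} to $\mu_C^{\mathcal P}$ yields $\delta h_C/\delta\nu=\mu_C^{-1}(\nu-\A_C^*\beta)$ and $\delta h_C/\delta\beta=\I_C^{-1}\beta-\A_C\,\delta h_C/\delta\nu$, and I equate these with the formulas from the first stage. The coefficient of $\nu$ in $u$ forces $\mu_C^{-1}=\mu_0^{-1}(1+\A_0^*C)$ (first bullet), the coefficient of $\beta$ forces $\A_0^*G=\mu_0\mu_C^{-1}\A_C^*$, the coefficient of $\nu$ in $X$ forces $\A_C=\A_0+\I_0^{-1}C\mu_C$, and the coefficient of $\beta$ in $X$, after substituting the three relations already found and using $\mu_0^{-1}\A_0^*G=\mu_C^{-1}\A_C^*$, collapses to $\I_C^{-1}=\I_0^{-1}(G-C\A_C^*)$ (third bullet). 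Read forwards these give $(2)\Rightarrow(1)$; the matching shows they are forced, giving $(1)\Rightarrow(2)$. The positivity clauses are precisely the statement that the resulting $\mu_C,\I_C$ are genuine inner products.

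For $(2)\Leftrightarrow(3)$ I work entirely with the metric data. From $\mu_C=(1+\A_0^*C)^{-1}\mu_0$ I record $\mu_0=\mu_C+\A_0^*C\mu_C$ and $C\mu_C=\I_0(\A_C-\A_0)$, the latter being the recovery formula for $C$. The decisive step is to take adjoints: self-adjointness of $\I_0,\I_C$ turns the target identity \eqref{LP1} $\I_C\A_C=\I_0\A_0$ into $\A_C^*\I_C=\A_0^*\I_0$, and substituting $C\mu_C=\I_0(\A_C-\A_0)$ then rewrites $\A_0^*C\mu_C=\A_0^*\I_0\A_C-\A_0^*\I_0\A_0=\A_C^*\I_C\A_C-\A_0^*\I_0\A_0$, which is exactly $\mu_0-\mu_C$; this is \eqref{LP2}. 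In the other direction, the second bullet $\A_0^*G=\mu_0\mu_C^{-1}\A_C^*$ together with $G=\I_0\I_C^{-1}+C\A_C^*$ becomes, after cancelling the common $\A_0^*C\A_C^*$ and using $\mu_0\mu_C^{-1}=1+\A_0^*C$, the identity $\A_0^*\I_0\I_C^{-1}=\A_C^*$, whose adjoint is again \eqref{LP1}. Hence \eqref{LP1} is interchangeable with the second bullet and \eqref{LP2} encodes the first, and reversing each implication gives $(3)\Rightarrow(2)$ along with the recovery formulas $C=\I_0(\A_C-\A_0)\mu_C^{-1}$ and $G=\I_0\I_C^{-1}+C\A_C^*$.

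I expect the main obstacle to be bookkeeping rather than conceptual depth: one must track each linear map through its correct source and target among the four spaces, apply adjoints consistently (remembering $(\A^*)^*=\A$, $\mu_\bullet^*=\mu_\bullet$, $\I_\bullet^*=\I_\bullet$), and check that invertibility and positive-definiteness survive the substitutions so that the assertions ``$\mu_C$ defines an inner product'' and ``$\I_C$ defines an inner product'' are legitimate rather than merely formal. A subtler point needing justification is the reading of condition (1): one should argue that being Lie--Poisson for a kinetic-energy $h_C$ really amounts to the pointwise identities $u=\delta h_C/\delta\nu$, $X=\delta h_C/\delta\beta$, and not merely to equality of the vector fields $\ad(u)^*\nu=\ad(\delta h_C/\delta\nu)^*\nu$ (which could a priori differ by a central term); here the affine-linearity of both sides and the direct-product Kaluza--Klein form of $h_C$ are what close the gap. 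In the infinite-dimensional setting one also keeps in mind that all the maps in play are isomorphisms onto the regular duals.
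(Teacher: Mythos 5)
Your proposal is correct and follows essentially the same route as the paper: the explicit formulas for $u$ and $X$ come from the same substitution $\alpha=G\beta-C\nu$ into \eqref{e:uX}, and the equivalence $(1)\Leftrightarrow(2)$ is obtained, exactly as in the paper, by equating the affine-linear expression for $(u,X)$ in $(\nu,\beta)$ with the block inverse of a candidate Kaluza--Klein metric $\mu_C^{\mathcal{P}}$ and reading off the four relations $\mu_C^{-1}=\mu_0^{-1}(1+\A_0^*C)$, $\mu_C^{-1}\A_C^*=\mu_0^{-1}\A_0^*G$, $\A_C=\A_0+\I_0^{-1}C\mu_C$, $\I_C^{-1}=\I_0^{-1}(G-C\A_C^*)$. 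The only minor divergence is in $(2)\Leftrightarrow(3)$, where the paper extracts (LP1)--(LP2) by comparing the $\nu$-row of $\mu_C^{\mathcal{P}}(u,X)^{\top}=(\nu,\beta)^{\top}$ with that of $\mu_0^{\mathcal{P}}(u,X)^{\top}=(\nu,\alpha)^{\top}$, whereas you derive them by taking adjoints of the bullet conditions directly; both amount to the same few lines of algebra.
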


\begin{proof}
It remains to show equivalence of items (1), (2) and (3). 


Equivalence of (1) and (2) follows from the observation that the closed-loop system \eqref{e:ceom2}-\eqref{e:ceom3} can be written as a Lie-Poisson system with respect to a kinetic energy Hamiltonian (of Kaluza-Klein type) if, and only if, 
\[ 
 \left(
  \begin{matrix}
   u\\
   X 
  \end{matrix}
 \right)
 = 
 \Big(\mu_0^{\mathcal{P}}\Big)^{-1} 
  \left(
  \begin{matrix}
   \nu \\
   \alpha  
  \end{matrix}
 \right)
 = 
  \Big(\mu_0^{\mathcal{P}}\Big)^{-1} 
  \left(
  \begin{matrix}
   1 & 0 \\
   -C & G  
  \end{matrix}
 \right)
  \left(
  \begin{matrix}
   \nu \\
   \beta   
  \end{matrix}
 \right)
 =   
  \Big(\mu_C^{\mathcal{P}}\Big)^{-1} 
  \left(
  \begin{matrix}
   \nu \\
   \beta   
  \end{matrix}
 \right)
 = 
    \left(
    \begin{matrix}
     \mu_C^{-1}  &  -\mu_C^{-1}\A_C^* \\
     -\A_C\mu_C^{-1} & \I_C^{-1} + \A_C\mu_C^{-1}\A_C^* 
    \end{matrix}
    \right)
      \left(
  \begin{matrix}
   \nu \\
   \beta   
  \end{matrix}
 \right)
\] 
This implies the following equations: $\mu_C^{-1} = \mu_0^{-1}(1+\A_0^*C)$, $\mu_C^{-1}\A_C^* = \mu_0^{-1}\A_0^* G$, $\A_C = \A_0+\I_0^{-1} C \mu_C$ and $\I_C^{-1} = \I_0^{-1}(G - C\A_C^*)$. 

Equivalence of (2) and (3).  
The above matrix equation implies $\nu = (\mu_C+\A_C^*\I_C\A_C)u + \I_C\A_C X = (\mu_0+\A_0^*\I_0\A_0)u + \I_0\A_0 X$ which splits into two equations, namely \eqref{LP1} and \eqref{LP2}, for $X$ and $u$. Thus  \eqref{LP1} and \eqref{LP2} follow if \eqref{e:ceom2}-\eqref{e:ceom3} are Lie-Poisson with respect to $\mu_C^{\mathcal{P}}$. 

Conversely, if \eqref{LP1} and $\eqref{LP2}$ are satisfied such that $C = \I_0 (\A_C-\A_0) \mu_C^{-1}$ and $G = \I_0\,\I_C^{-1} + C\,\A_C^*$. Then it follows that $\mu_C = (1+\A_0^* \, C)^{-1}\,\mu_0$, $\mu_0^{-1}\,\A_0^*\,G = \mu_C^{-1}\A_C^*$, $\A_0+\I_0^{-1}\,C\,\mu_C = \A_C$ and $(G-C\,\A_C^*)^{-1}\,\I_0 = \I_C$.
\end{proof}

\begin{remark}\label{rem:constr}
Items (2) and (3) are both constructive in the sense that they algebraically specify the conditions which the control force, $F$, or the conditions which the controlled metric, $\mu_C^{\mathcal{P}}$, has to satisfy in order to obtain an equivalent Lie-Poisson system. The second point of view is the one that is also taken in the theory of controlled Hamiltonians (or Lagrangians) in \cite{BLM01a,BLM01b}. 
\end{remark}

\begin{remark}\label{rem:beta0}
Let $\beta=0$. Then equation~\eqref{e:ceom2} is Lie-Poisson on $\du^*$ with respect to a kinetic energy Hamiltonian if, and only if, $1+\A_0^*C: \du^*\to\du^*$ is invertible and $\mu_C := (1+\A_0^*C)^{-1}\mu_0$  defines an inner product on $\du$.  In this case, the Hamiltonian is defined by  $h_C: \du^*\to\R$, $\nu\mapsto\vv<\nu,\mu_C^{-1}\nu>/2$ and $u=\delta h_C/\delta\nu$. Hence, for $\beta=0$, the conditions on $\I_C$ and $\A_C$ are void, and one may use $G=1$ in definition \eqref{e:def_j}. 
\end{remark}

\begin{remark}[Satellite with a rotor]\label{rem:sat3}
Continuing the discussion from Remark~\ref{rem:sat2}, we specify $\mu_0 = \diag{\lam_1,\lam_2,I_3}$ with $I_3>\lam_2>\lam_1>0$ and $\I_0 = i_3>0$. Further, the connection form is given by the projection $\A_0 = e_3^{\bot}: \R^3\to\R$, $(u^1,u^2,u^3)\mapsto u^3$. 
The Hamiltonian $h_0: \du^*\times\gu^* = \R^3\times\R\to\R$ is consequently given in Kaluza-Klein form as
\[
 h_0(\nu,\alpha)
 = \frac{1}{2}\vv<(\mu_0^{\mathcal{P}})^{-1}(\nu,\alpha),(\nu,\alpha)>
 = 
 \frac{1}{2}
 \left\langle 
 \left(
 \begin{matrix}
  \diag{\lam_1^{-1}, \lam_2^{-1}, I_3^{-1}} & -I_3^{-1} e_3 \\ 
  -I_3^{-1}e_3^{\bot} & i_3^{-1} + I_3^{-1}
 \end{matrix}
 \right)
  \left(
 \begin{matrix}
  \nu \\ 
  \alpha 
 \end{matrix}
 \right)  , 
  \left(
 \begin{matrix}
  \nu \\ 
  \alpha 
 \end{matrix}
 \right)
 \right\rangle 
\]
Thus we may use the formulas in item (2) of Proposition~\ref{prop:matching} to construct feedback controlled equations of motion which are in Lie-Poisson form. To do so it is necessary that $C: \du^*=\R^3\to\R$ is such that  $1+\A_0^*C = \diag{1,1,1} + e_3C$ is a symmetric and positive definite matrix. This is the case if 
$C 
 = k \I_0\A_0\mu_0^{-1} 
 = k \frac{i_3}{I_3} e_3^{\bot}$ 
for some $k>-I_3/i_3$, since then $1+\A_0^*C = \diag{1,1,1 + k \frac{i_3}{I_3}}$.  
This allows to obtain the controlled metric tensor as $\mu_C = \diag{\lam_1, \lam_2, (1 + k i_3/I_3)^{-1} I_3}$. 
The other data can now be calculated as
\begin{align*}
    \A_C
    &= 
    \A_0 + \I_0^{-1}C\Big(1+\A_0^*C\Big)^{-1}\mu_0
    = \Big(1+\frac{k }{1+k}\Big)e_3^{\top} \\
    G 
    &= 
    1+k\frac{i_3 + I_3}{I_3} \\ 
    \I_C 
    &= 
    \Big(G-C\A_C^*\Big)^{-1}\I_0
    = \frac{i_3(I_3+ki_3)}{I_3+ki_3 + kI_3}
\end{align*}
Therefore, Proposition~\ref{prop:matching} implies that the Kaluza-Klein metric $\mu_C^{\mathcal{P}}$ associated to $\mu_C$, $\I_C$ and $\A_C$ satisfies the matching conditions. That is, the closed loop equations \eqref{e:ceom2}-\eqref{e:ceom3}, with $N=0$ and  $\beta := G^{-1}(C\nu+\alpha)$, are  given by 
\begin{align*}
    \dot{\nu} 
    &= \ad(u)^*\nu 
    = -u\times\nu,
    \qquad 
    \dot{\beta} = 0 
\end{align*}
and these equations are of Lie-Poisson form:
indeed, 
$u = \delta h_C/\delta \nu$ 
and 
$h_C(\nu,\beta) 
= \vv<(\nu,\beta)^{\top}, (\mu_C^{\mathcal{P}})^{-1}(\nu,\beta)^{\top}>/2$.
\end{remark}

\subsection{Equivalence to Euler-Poincar\'e  matching conditions (\cite{BLM01b})}\label{sec:EPcond}
When $\G$ and $\D$ are finite dimensional, so that \eqref{LP1} and \eqref{LP2} can be expressed in local coordinated then these conditions coincide with the matching conditions (EP1) and (EP2) of \cite{BLM01b}. 

To see this equivalence, write the metric $\mu_0^{\mathcal{P}}$ in local coordinates $w_i$ and $Z_a$ on $\du$ and $\gu$, respectively,
such that $u = u^i w_i$ and $X = X^a Z_a$:
\begin{equation}
    \label{eS1:mu_loc} 
    \mu_0^{\mathcal{P}}
    = 
    \left(
    \begin{matrix}
     (\mu_0^{\mathcal{P}})_{ij} & (\mu_0^{\mathcal{P}})_{ia} \\ 
     (\mu_0^{\mathcal{P}})_{ai} & (\mu_0^{\mathcal{P}})_{ab}
    \end{matrix}
    \right) 
\end{equation}
 The index convention is to let $i,j,k = 1,\dots,\textup{dim}\,\du$ and $a,b,c = 1,\dots,\textup{dim}\,\gu$, and summation over repeated indices is implied.
The relation between $\mu_0$, $\I_0$ and $\A_0$, as appearing in \eqref{eS1:KK}, and the local coordinate data are
$(\mu_0^{\mathcal{P}})_{ij} 
= (\mu_0)_{ij} + (A_0)^c_i (\mu_0^{\mathcal{P}})_{cb} (\A_0)^b_j$, 
$(\I_0)_{ab} = (\mu_0^{\mathcal{P}})_{ab}$ and $(\A_0)^a_i = (\mu_0^{\mathcal{P}})^{ab}(\mu_0^{\mathcal{P}})_{bi}$. 

The controlled Lagrangian, $l_{\tau,\sigma,\rho}$, constructed by \cite{BLM01b}  is defined in terms of $\tau = \A_C - \A_0: \du\to\gu$ with coordinate expression $\tau^a_i$, and matrices $(\mu_{\rho}^{\mathcal{P}}) := \textup{diag}(0_{ij}, \rho_{ab})$ and  $(\mu_{\sigma}^{\mathcal{P}}) := \textup{diag}(0_{ij}, \sigma_{ab})$ where $\textup{diag}$ denotes the diagonal matrix, $(0_{ij})$ is the quadratic zero matrix on $\du$, and $(\rho_{ab})$, $(\sigma_{ab})$ are symmetric and invertible matrices.

The vertical space of the left (or right) $\G$-action  on $\P = \D\times\G$ is $\textup{VER} = \P\times\gu$. The horizontal space associated to $\A_0$ is $\textup{HOR}_0 = \P\times\{(u, -\A_0u)\}$ while the horizontal space associated to $\A_C$ is $\textup{HOR}_{\tau} = \P\times\{(u,-(\A_0+\tau)u)\}$. Therefore, the controlled Lagrangian (\cite[Equ.~(5)]{BLM01b}) is
\begin{align*} 
 l_{\tau,\sigma,\rho} \left(
   \begin{matrix}
    u \\ 
    X
   \end{matrix}
  \right) 
 &= 
 \frac{1}{2}\left\langle 
  \mu_0^{\mathcal{P}}
  \left(
   \begin{matrix}
    u \\ 
    -\A_0 u
   \end{matrix}
  \right) ,
    \left(
   \begin{matrix}
    u \\ 
    -\A_0 u
   \end{matrix}
  \right) 
 \right\rangle 
 + 
  \frac{1}{2}\left\langle 
  \mu_{\sigma}^{\mathcal{P}}
  \left(
   \begin{matrix}
    0 \\ 
    -\tau u
   \end{matrix}
  \right) ,
    \left(
   \begin{matrix}
    0 \\ 
    -\tau u
   \end{matrix}
  \right) 
 \right\rangle  \\
 &\phantom{==}
 + 
   \frac{1}{2}\left\langle 
  \mu_{\rho}^{\mathcal{P}}
  \left(
   \begin{matrix}
    0 \\ 
    X + (\A_0+\tau)u
   \end{matrix}
  \right) ,
    \left(
   \begin{matrix}
    0 \\ 
    X + (\A_0+\tau)u
   \end{matrix}
  \right) 
 \right\rangle \\
 &= 
 \frac{1}{2}\Big( 
  (\mu_0)_{ij} u^i u^j  +  \sigma_{ab}\tau^a_i\tau^b_j u^i u^j 
  + \rho_{ab} (\A_0+\tau)^a_i (\A_0+\tau)^b_j u^i u^j 
  + 2 \rho_{ab} (\A_0+\tau)^b_i u^i X^a + \rho_{ab}X^aX^b  
 \Big) \\
 &= 
    \frac{1}{2}\left\langle 
  \left( 
  \begin{matrix}
   (\mu_0)_{ij} + \sigma_{cd}\tau^c_i\tau^d_j + \rho_{cd} (\A_0+\tau)^c_i (\A_0+\tau)^d_j 
   & \rho_{bc} (\A_0+\tau)^c_i  \\ 
   \rho_{ac} (\A_0+\tau)^c_j  &   \rho_{ab}
  \end{matrix}
  \right) 
  \left(
   \begin{matrix}
     u^j \\ 
     X^b
   \end{matrix}
  \right) ,
    \left(
   \begin{matrix}
     u^i\\ 
     X^a 
   \end{matrix}
  \right) 
 \right\rangle \\
 &= 
    \frac{1}{2}\left\langle 
  \mu_C^{\mathcal{P}} 
  \left(
   \begin{matrix}
     u \\ 
     X
   \end{matrix}
  \right) ,
    \left(
   \begin{matrix}
     u\\ 
     X 
   \end{matrix}
  \right) 
 \right\rangle
\end{align*} 
The last equality defines $\mu_C^{\mathcal{P}}$ in matrix form. 
Because of \eqref{eS1:KK} this implies the coordinate expressions
\begin{align}
\label{eS1:muCloc}
    (\mu_C)_{ij} + (\A_C^*\I_C\A_C)_{ij} 
    &= 
    (\mu_0)_{ij} + \sigma_{cd}\tau^c_i\tau^d_j + \rho_{cd} (\A_0+\tau)^c_i (\A_0+\tau)^d_j \\
    \notag 
    (\I_C)_{ab}
    &= \rho_{ab} \\ 
    \notag 
    (\A_C)_i^b 
    &= \rho^{ba}\rho_{ac}\Big( (\A_0)^c_i + \tau^c_i\Big) 
\end{align}
which allows to reformulate \eqref{LP1} as
\begin{equation}
    \label{LP1loc}\tag{LP1loc} 
    \tau^a_i
    = \Big(\rho^{ab} - (\mu_0^{\mathcal{P}})^{ab}\Big) (\mu_0^{\mathcal{P}})_{bi}
\end{equation}
Using this condition and the above equation \eqref{eS1:muCloc} for  $(\mu_C)_{ij} + (\A_C^*\I_C\A_C)_{ij}$ it follows, after some calculation, that \eqref{LP2} is equivalent to 
\begin{equation}
    \label{LP2loc}\tag{LP2loc} 
    \sigma^{ab} 
    =  (\mu_0^{\mathcal{P}})^{ab} - \rho^{ab}. 
\end{equation}
Equations \eqref{LP1loc} and \eqref{LP2loc} coincide with the Euler-Poincar\'e matching conditions EP1,
$\tau_i^a = -\sigma^{ab} (\mu_0^{\mathcal{P}})_{bi}$, and EP2, 
$\sigma^{ab} +  \rho^{ab} =  (\mu_0^{\mathcal{P}})^{ab}$, of \cite{BLM01b}. The Euler-Poincar\'e formulation  is carried out in the Lagrangian setting while the Lie-Poisson approach is a special case of Hamiltonian mechanics. The equivalence of controlled Lagrangians and controlled Hamiltonians is analyzed systematically from a general point of view in \cite{CBLMW02}. 

In the finite dimensional case, Proposition~\ref{prop:matching} is thus equivalent to the Euler-Poncar\'e matching construction of \cite{BLM01b}.  
Proposition~\ref{prop:matching} is stated in a coordinate free manner and is shown to apply, together with Proposition~\ref{prop:Cons}, also in the non-Abelian setting. Hence this construction can be used in the Hall MHD example of Section~\ref{Sec3}, where the symmetry group $\G$ is non-Abelian and infinite dimensional. Moreover, since the theory of controlled (simple) Lagrangians is equivalent to that of controlled (simple) Hamiltonians (\cite{CBLMW02,CM04})  also the Euler-Poincar\'e matching conditions generalize to the non-Abelian setting and the corresponding controlled equations of motion are given by the Lagrangian analogue of \eqref{e:ceom1}.

\subsection{Controlled dissipation}\label{sec:ConDis}
Let $h_0(\nu,\alpha) = \vv<(\nu,\alpha)^{\top}, (\mu_0^{\mathcal{P}})^{-1}(\nu,\alpha)^{\top}>/2$ be of Kaluza-Klein type with $\mu_0^{\mathcal{P}}$ defined by \eqref{eS1:KK} in terms of $\mu_0$, $\I_0$ and $\A_0$. 
Consider maps $C: \du^*\to\gu^*$, $N:\du^*\to\gu^*$ and $G: \gu^*\to\gu^*$ and define $j$ as in definition~\eqref{e:def_j}. 
It is assumed that 
\begin{enumerate}[\up (1)]
    \item 
    $C$ is linear and  $\mu_C := (1+\A_0^*C)^{-1}\mu_0$ defines an inner product on $\du$;
    \item 
    $G$ is a (not necessarily equivariant) isomorphism.
\end{enumerate}
As above, let $\beta = G^{-1}(\alpha + C\nu + N(\nu))$ such that there is a diffeomorphism 
\[ 
 \du^*\times\gu^*\to\du^*\times\gu^*,\qquad
 (\nu,\alpha)\mapsto(\nu,\beta)
\] 
where $N$ may be nonlinear. 

It is not assumed that $C$ and $G$ satisfy the matching conditions of Proposition~\ref{prop:matching}. In fact, the assertions in this subsection hold with $G=1$, while $\A_C$ and $\I_C$ do not enter the analysis.  

The controlled equations~\eqref{eS1:ceom1LP}-\eqref{eS1:ceom2LP} are equivalent to the closed-loop equations~\eqref{e:ceom2}-\eqref{e:ceom3}. 
Subsequently, the goal is to asymptotically stabilize the equations at an equilibrium $(\nu_e,\alpha_e)$ of \eqref{eS1:ceom1LP}-\eqref{eS1:ceom2LP} such that the closed-loop relation $G\beta_e = \alpha_e+C\nu_e+N(\nu_e)$ is preserved. 
To this end a dissipative control, 
$\tilde{\mathcal{U}}_{\textup{diss}}(\nu,\alpha) 
 = G\,\mathcal{U}_{\textup{diss}}(\nu,\beta)
$,
is now added to \eqref{eS1:ceom2LP}. 
Thus, 
\begin{equation}
    \label{eS1:LPdiss}
    \dot{\nu} 
    = \pm\ad(\delta h_0/\delta\nu)^*\nu ,
    \qquad 
    \dot{\alpha}
    = \pm\ad(\delta h_0/\delta\alpha)^*\alpha + \mathcal{U}_{\textup{LP}}(\nu,\alpha) + \tilde{\mathcal{U}}_{\textup{diss}}(\nu,\alpha)
\end{equation}
such that we obtain the closed loop system 
\begin{align}
  \label{eS1:LPdiss1}
    \dot{\nu} 
    &= 
    \pm\,\ad(\mu_C^{-1}\nu)^*\nu 
    \,\mp\,\ad(\mu_0^{-1}\A_0^* G\beta)^*\nu 
    \,\pm\,\ad(\mu_0^{-1}\A_0^* N(\nu))^*\nu  \\
      \label{eS1:LPdiss2}
    \dot{\beta}
    &= \pm\,\ad(\delta h_0/\delta\alpha)^*\beta 
    + \mathcal{U}_{\textup{diss}}(\nu,\beta)
\end{align}
where $\delta h_0/\delta\alpha = (\I_0^{-1}+\A_0\mu_0^{-1}\A_0^*)(G\beta - C\nu - N(\nu)) - \A_0\mu_0^{-1}\nu$. 

Let $s$ be a parameter which will be chosen $+1$ or $-1$, depending on whether the goal is to stabilize the system at an energy minimum or maximum. Define
\begin{equation}
    \label{eS1:defN}
    N(\nu) 
    := \pm\,s\I_0 \A_0 \mu_0^{-1} \ad(\mu_C^{-1}\nu)^*\nu 
\end{equation}
and 
\begin{equation}
    \label{eS1:defUdiss} 
    \mathcal{U}_{\textup{diss}}(\nu,\beta) 
    := 
    \mp\,\ad(\delta h_0/\delta\alpha)^*\beta 
    - \beta 
    - G^{-1} N(\nu) 
\end{equation}
Consider now the function $g_C: \du^*\times\gu^*\to\R$ defined by 
\begin{equation}
    \label{eS1:def_g_C}
    g_C(\nu,\beta) 
    := \frac{1}{2}\vv<\nu, \mu_C^{-1}\nu> + \frac{s}{2}\vv<G\beta, \I_0^{-1}G\beta>
\end{equation}
This function is not the Hamiltonian of the controlled system \eqref{eS1:LPdiss1}-\eqref{eS1:LPdiss2}, and does not satisfy the matching conditions of Proposition~\ref{prop:matching} (except in the trivial case $\A_0=0$). The motivation for $g_C$ is that at $\beta=0$ it coincides with the Hamiltonian function corresponding to the (reduced) equation $\dot{\nu} = \pm\ad(\mu_C^{-1}\nu)^*\nu$, while in $\beta$ it is the most natural quadratic choice. The dissipative force, $\mathcal{U}_{\textup{diss}}$, satisfies
\begin{align}
    \label{eS1:g_C}
    \frac{\del}{\del t}g_C(\nu_t,\beta_t) 
    &= 
    \mp\,\Big\langle \ad\Big(\mu_0^{-1}\A_0^*G\beta)\Big)^*\nu, \mu_C^{-1}\nu\Big\rangle
    \,\pm\, 
    \Big\langle \ad\Big(\mu_0^{-1}\A_0^* N(\nu) \Big)^*\nu, \mu_C^{-1}\nu\Big\rangle  \\ 
    &\phantom{==}
    + 
    s\Big\langle  
     -G\beta - N(\nu), \I_0^{-1}G\beta 
    \Big\rangle 
    \notag \\
    \notag 
    &= 
    -s \Big\langle 
     \ad(\mu_C^{-1}\nu)^*\nu , \mu_0^{-1}\A_0^*\I_0\A_0\mu_0^{-1}\,\ad(\mu_C^{-1}\nu)^*\nu
    \Big\rangle 
    - s \Big\langle G\beta, \I_0^{-1}G\beta \Big\rangle 
    \\ 
    \notag 
    &=
    - s\, \Big|\Big| N(\nu) \Big|\Big|_{\mathbb{I}_0^{-1}}^2
    -s \, \Big|\Big| G\beta \Big|\Big|_{\mathbb{I}_0^{-1}}^2
\end{align}
where $||\cdot||_{\mathbb{I}_0^{-1}}$ is the norm associated to $\I_0^{-1}$ on $\gu^*$. 
Hence, for $s = 1$, the closed loop system is weakly dissipative: 
\[
  \frac{\del}{\del t}\,g_C(\nu,\beta) \le 0 
\]
Reversing the sign, $s=-1$, yields $\del g_C(\nu)/\del t \ge 0$.

If $K: \du^*\to\R$ is a Casimir function (i.e., constant on coadjoint orbits) equation~\eqref{eS1:LPdiss1} implies that $\del K(\nu_t)/\del t = 0$. 

\begin{remark}[Stability]\label{rem:stable} 
Suppose that $\nu_e$ is an unstable equilibrium of the uncontrolled equation $\dot{\nu} = \pm\ad(\mu_0^{-1}\nu)^*\nu$. Suppose  further that $\ad(\mu_C^{-1}\nu_e)^*\nu_e$ and consequently $N(\nu_e) = 0$.  

The question of asymptotic stability of the equilibrium $(\nu_e, \alpha_e = -C\nu_e)$ of the controlled set of equations \eqref{eS1:LPdiss} may be approached in the following manner:
\begin{enumerate}[\up (1)]
    \item 
    Observe that $(\nu_e,\beta_e=0)$ is an equilibrium of \eqref{eS1:LPdiss1}-\eqref{eS1:LPdiss2} by construction. 
    \item 
    Find a Casimir function $K_C: \du^*\to\R$ such that the first variation at $\nu_e$ satisfies $D^1K_C(\nu_e) = -\mu_C^{-1}\nu_e$. It follows that $D^1(g_C+K_C)(\nu_e,0)(\nu,\beta) = \vv<\nu,\mu_C^{-1}\nu_e> + D^1K_C(\nu_e)(\nu) + s\vv<G\beta,\I_0^{-1}G0> = 0$.
    \item
    Ensure that  $C$ is such that the second variation $D^2(g_C+K)(\nu_e,0)$ is positive definite (with $s=1$) or negative definite (with $s=-1$). 
    \item 
    Conclude that $s(g_C+K_C)(\nu_t,\beta_t) - s(g_C+K_C)(\nu_e,0)$ is a decreasing Lyapunov function, and use LaSalle's invariance principle if the inequality $s\del(g_C+K)/\del t \le 0$ is not strict in a punctured neighbourhood of $(\nu_e,0)$.
\end{enumerate}
In the infinite dimensional case additional convexity arguments  may be needed (\cite{HMRW85}). 
\end{remark}


\begin{remark}[Double bracket dissipation]
Let $\mathcal{U}_{\textup{diss}} = 0$. Assume $\beta_0 = 0$, initially, which is  preserved by the dynamics \eqref{eS1:LPdiss2} so that $\beta=0$ as long as solutions exist. Define $h_C = h_C(\nu) = \vv<\nu, \mu_C^{-1}\nu>/2$. In line with Remark~\ref{rem:beta0}, the quantities $\I_C$ and $\A_C$ do not enter the analysis.  

With the above definition of $N(\nu)$, equation~\eqref{eS1:LPdiss1} is of double bracket form (\cite{BKMR96}):
\begin{equation}
    \label{e:double_bra}
    \dot{f}
    = \{f,h\} - \{\{f,h\}\}
\end{equation}
where $\{.,.\}$ is a Poisson bracket and $\{\{.,.\}\}$ is a symmetric bracket. This reformulation follows from the definitions $\Gamma := \pm \mu_0^{-1}\A_0^*\I_0\A_0\mu_0^{-1}: \du^*\to\du$ and
\begin{align*}
     \{f,h\}(\nu) 
     &= \mp\Big\langle\nu, \ad\Big(\frac{\delta f}{\delta\nu}\Big)\frac{\delta h}{\delta\nu} \Big\rangle \\
     \{\{f,h\}\}(\nu)
     &= \Big\langle\ad\Big(\frac{\delta f}{\delta\nu}\Big)^*\nu, 
        \Gamma\,\ad\Big(\frac{\delta h}{\delta\nu}\Big)^*\nu \Big\rangle
\end{align*}
\end{remark}

\begin{remark}[Non-matching]
The control maps $C$ and $G$ in \eqref{eS1:LPdiss1}-\eqref{eS1:LPdiss2} need not satisfy the matching conditions of Proposition~\ref{prop:matching}. However, this does not mean that no use of Hamiltonian structure is made. In the $\nu$ variable it is used that coadjoint orbits are preserved, and a Casimir function $K_C = K_C(\nu)$ is necessary. On the other hand, in the $\beta$ variable the goal is to asymptotically approach the origin, and hence a violation of the coadjoint orbit structure is desirable.  
\end{remark}

\subsection{Port controlled Hamiltonian systems: interconnection and damping assignment}
Let $\Pi_0: T^*\po^*\to T\po^*$ denote the Poisson tensor associated to the (plus or minus) Lie-Poisson structure on $\po^*=\du^*\times\gu^*$, and $\iota: \gu^*\to \du^*\times\gu^*$, $\alpha\mapsto(0,\alpha)$ the inclusion of the second factor.  
The second dual $T^*\po^* = \po^*\times\po^{**}$ is identified with $\po^*\times\po$. In the infinite dimensional case this means that only the regular part of the dual is considered. Hence $\Pi_0(x): \po\to\po^*$. 
Then equation~\eqref{eS1:LPdiss} may be reformulated as a port controlled Hamiltonian system  on $\po^*$:
\begin{align}
    \label{eS1:pch}
    \dot{x}
    &= \Pi_0(x)\,\frac{\delta h_0}{\delta x} + \iota\, \mathcal{U} \\
    \notag 
    y &= \delta h_0/\delta\alpha 
\end{align}
where $x=(\nu,\alpha)$, $\delta h_0/\delta x = (\delta h_0/\delta\nu,\delta h_0/\delta\alpha)$, and $y$ and $\mathcal{U} = \mathcal{U}_{\textup{LP}} + \mathcal{U}_{\textup{diss}}$ are conjugate in the sense that $\del h_0/\del t = \vv<\mathcal{U},\delta h_0/\delta\alpha>$.

The objective of interconnection and damping assignment -- passivity based control (IDA-PBC) is now to find a desired structure, consisting of a new skew symmetric tensor  $\Pi_d(x): \po\to\po^*$, a positive and symmetric tensor  $\mathcal{R}_d(x): \po\to\po^*$ and a new function $h_d$, so that \eqref{eS1:pch} may be expressed as 
\begin{equation}
    \label{eS1:pch2}
    \dot{x} 
    = \Big( \Pi_d(x) - \mathcal{R}_d(x) \Big) \, \frac{\delta h_d}{\delta x}. 
\end{equation}
The bracket associated to $\Pi_d$, namely $\{f,g\}_d = \vv<\Pi_d \,\delta f/\delta x, \delta g/\delta x >$, is bilinear, skew-symmetric and satisfies the Leibniz identity, but not necessarily the Jacobi identity. It is therefore an almost Poisson structure. See \cite{OSME02, OG04}.

With \eqref{eS1:defN} and \eqref{eS1:defUdiss} we may reformulate the closed loop equations~\eqref{eS1:LPdiss1} and \eqref{eS1:LPdiss2} as 
\begin{align}
    \label{eS1:pH1} 
    \dot{\nu} 
    &=
    \pm\,\ad\Big(\frac{ g_C}{\delta\nu} \Big)^*\nu 
    \,\mp\,s\,\ad\Big(\mu_0^{-1}\A_0^* G \I_0\frac{\delta g_C}{\delta\beta} \Big)^*\nu 
    + s\,\ad\Big(\mu_0^{-1}\A_0^* \I_0 \A_0 \mu_0^{-1} \ad(\frac{ g_C}{\delta\nu})^*\nu \Big)^*\nu  \\
      \label{eS1:pH2}
    \dot{\beta}
    &=  
    - s\I_0\frac{\delta g_C}{\delta\beta}
    \,\mp\, s \,G^{-1}\I_0  \A_0 \mu_0^{-1} \ad(\frac{ g_C}{\delta\nu})^*\nu 
\end{align}
where, as before, $s$ is $+1$ or $-1$, and we use that $\delta g_C/\delta\nu = \mu_C^{-1}\nu$ and $\delta g_C/\delta\beta = s\I_0^{-1}\beta$. This system involves only first order derivatives of $g_C$, hence one can algebraically arrive at the form \eqref{eS1:pch2}. 

Let $z = (\nu,\beta) = (\nu, G^{-1}(\alpha+C\nu+N(\nu))) =: \Phi(\nu,\alpha) = \Phi(x)$ and define 
\begin{equation}
    \Pi_C(z)(v,Y) := 
    \left(\begin{matrix}
        \pm\ad(v)^*\nu \,\mp\, s\,\ad\Big(\mu_0^{-1}\A_0^*\I_0(G^*)^{-1}Y \Big)^*\nu \\
        \mp\,s\, G^{-1}\I_0\A_0\mu_0^{-1}\ad(v)^*\nu 
    \end{matrix}\right)     
\end{equation}
which is skew symmetric, and 
\begin{equation}
    \mathcal{R}_C(z)(v,Y)
    := 
    s\left(\begin{matrix}
     -\ad\Big(\mu_0^{-1}\A_0^*\I_0\A_0\mu_0^{-1}\ad(v)^*\nu \Big)^*\nu  \\
     G^{-1}\I_0(G^*)^{-1} Y 
    \end{matrix}\right) 
\end{equation}
which is symmetric and positive (resp., negative) semi-definite for $s=1$ (resp., $s=-1$). 
As above, $\Pi_C$ and $\mathcal{R}_C$ are viewed as bundle homomorphisms $\po^*\times\po = T^*\po^*\to T\po^* = \po^*\times\po^*$, although both are independent of $\beta$.

With this notation it follows that \eqref{eS1:pH1}-\eqref{eS1:pH2} can be written in the desired form  
\begin{equation}
\label{eS1:pHS}
    \dot{z} 
    = \Big( \Pi_C(z) - \mathcal{R}_C(z)\Big)\frac{\delta g_C}{\delta z} . 
\end{equation}
This equation is expressed in the $z$-coordinates. To obtain \eqref{eS1:pch2}, we can use the pullback along $\Phi: x = (\nu,\alpha) \mapsto z = (\nu,\beta) = (\nu, G^{-1}(\alpha + C\nu + N(\nu)))$. Indeed, the  new structure is given by the corresponding pullbacks $\Pi_d = \Phi^*\Pi_C$, $\mathcal{R}_d = \Phi^*\mathcal{R}_C$ and $h_d = \Phi^*g_C = g_C\circ\Phi$. Using \eqref{eS1:pullb} we have from \eqref{eS1:pHS} that 
\begin{align}
\label{eS1:pHSx}
    \dot{x}
    &= T\Phi^{-1}\,\dot{z} 
    = T\Phi^{-1}\,\Big( \Pi_C(\Phi(x)) - \mathcal{R}_C(\Phi(x))\Big)
        \Big(T\Phi^{-1}\Big)^*\Big(T\Phi\Big)^*
        \frac{\delta g_C}{\delta \Phi(x)} \\
     \notag    
    &= \Big( \Pi_d(x) - \mathcal{R}_d(x)\Big)\frac{\delta h_d}{\delta x} . 
\end{align}

\begin{remark}
The new Hamiltonian function, $h_d(\nu,\alpha) = \vv<\nu,\mu_C^{-1}\nu>/2 + s\vv<\alpha+C\nu+N(\nu), \I_0^{-1}(\alpha+C\nu+N(\nu))>/2$, is fourth order in $\nu$ owing to the nonlinearity in $N(\nu)$. Similarly, $\mathcal{R}_d$ is nonlinear in $\nu$.
The bracket associated to $\Pi_d$ does in general, unless $\A_0=0$ or when $\du$ is Abelian,  not satisfy the Jacobi identity.     
\end{remark}

\begin{theorem}\label{thm:FB_IDA}
Let $h_0$ be the kinetic energy Hamiltonian associated to \eqref{eS1:KK}.
Let $\Phi(\nu,\alpha)=(\nu,G^{-1}(\alpha+C\nu+N(\nu)))$ where $G: \gu^*\to\gu^*$ is an isomorphism, $C: \du^*\to\gu^*$ is linear such that $\mu_C := (1+\A_0^*C)^{-1}\mu_0$ defines an inner product on $\du$, and $N: \du^*\to\gu^*$ is given by \eqref{eS1:defN}. 
Then the following equations of motion are equivalent: 
\begin{enumerate}[\up (1)]
\item 
The feedback controlled system defined by \eqref{eS1:LPdiss} with
$\delta h_0/\delta\nu = \mu_0^{-1}\nu - \mu_0^{-1}\A_0^*\alpha$: 
\begin{align}\label{eS1:LPdissXPL}
    \dot{\nu}
    &= \pm\ad(\delta h_0/\delta\nu)^*\nu \\
    \notag
    \dot{\alpha}
    &= 
    \mp
         C\, \ad(\delta h_0/\delta\nu)^*\nu 
    -
     s\I_0\A_0\mu_0^{-1} \Big(
      \ad( \mu_C^{-1}\ad(\delta h_0/\delta\nu)^*\nu )^*\nu
      + \ad( \mu_C^{-1}\nu )^*(\ad(\delta h_0/\delta\nu)^*\nu \,\pm\,2\nu)
     \Big)
    -  \alpha - C\nu 
\end{align}
\item 
The IDA system \eqref{eS1:pHSx} with $x=(\nu,\alpha)$ and the fourth order Hamiltonian $h_d$. 
\item 
The IDA system \eqref{eS1:pHS} with  $z=\Phi(x)$ and the quadratic Hamiltonian $g_C$.
\end{enumerate}
\end{theorem}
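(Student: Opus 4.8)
The plan is to establish the two equivalences $(2)\Leftrightarrow(3)$ and $(1)\Leftrightarrow(3)$ separately, using $(3)$ — the clean quadratic system in the $z=(\nu,\beta)$ variables — as the hub. The first thing to record is that $\Phi(\nu,\alpha)=(\nu,G^{-1}(\alpha+C\nu+N(\nu)))$ is a genuine diffeomorphism of $\du^*\times\gu^*$: since $G$ is an isomorphism its inverse is $\Phi^{-1}(\nu,\beta)=(\nu,G\beta-C\nu-N(\nu))$, smooth with smooth inverse. Thus $(2)$ and $(3)$ are related by a bona fide change of coordinates, and both of these halves reduce to computations already assembled in the text preceding the statement.

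For $(2)\Leftrightarrow(3)$ I would simply read off the transformation law. Writing $z=\Phi(x)$ and differentiating gives $\dot x=T\Phi^{-1}\dot z$, while the pullback formula \eqref{eS1:pullb} yields $\delta h_d/\delta x=(T\Phi)^*\,\delta g_C/\delta z$ because $h_d=\Phi^*g_C$. Inserting the identity $(T\Phi^{-1})^*(T\Phi)^*=\id$ and the definitions $\Pi_d=\Phi^*\Pi_C$, $\mathcal{R}_d=\Phi^*\mathcal{R}_C$ turns \eqref{eS1:pHS} into \eqref{eS1:pHSx} exactly as displayed in the derivation of that equation. Hence $(2)$ and $(3)$ are literally the same dynamics written in the $x$- and $z$-charts, and this half needs no new computation beyond the invertibility of $\Phi$.

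For $(1)\Leftrightarrow(3)$ I would push \eqref{eS1:LPdiss} forward along $\Phi$: substituting $\beta=G^{-1}(\alpha+C\nu+N(\nu))$ turns \eqref{eS1:LPdiss} into the closed-loop pair \eqref{eS1:LPdiss1}-\eqref{eS1:LPdiss2}. Feeding in the explicit choices \eqref{eS1:defN} for $N$ and \eqref{eS1:defUdiss} for $\mathcal{U}_{\textup{diss}}$, together with the elementary identities $\delta g_C/\delta\nu=\mu_C^{-1}\nu$ and $\delta g_C/\delta\beta=s\I_0^{-1}\beta$ read off from \eqref{eS1:def_g_C}, rewrites that pair as \eqref{eS1:pH1}-\eqref{eS1:pH2}. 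The last step is to recognize \eqref{eS1:pH1}-\eqref{eS1:pH2} as \eqref{eS1:pHS} by unwinding the definitions of the skew tensor $\Pi_C$ and the symmetric tensor $\mathcal{R}_C$, which is a direct term-by-term match.

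What remains, and where I expect the only genuine bookkeeping, is verifying that the $\alpha$-equation of \eqref{eS1:LPdiss} really takes the explicit fourth-order shape \eqref{eS1:LPdissXPL} asserted in item $(1)$. Starting from $\dot\alpha=\pm\ad(X)^*\alpha+\mathcal{U}_{\textup{LP}}+\tilde{\mathcal{U}}_{\textup{diss}}$ with $X=\delta h_0/\delta\alpha$, the bare transport term is absorbed via \eqref{eS1:tildeU} so that the first two summands reduce to $\pm(G\,\ad(X)^*\beta-C\,\ad(u)^*\nu-dN(\nu)\,\ad(u)^*\nu)$ as in \eqref{eS1:ceom2LP}; the $G\,\ad(X)^*\beta$ here cancels against the $\mp G\,\ad(X)^*\beta$ inside $\tilde{\mathcal{U}}_{\textup{diss}}$, leaving $\dot\alpha=\mp C\,\ad(u)^*\nu\mp dN(\nu)\,\ad(u)^*\nu-G\beta-N(\nu)$ with $u=\delta h_0/\delta\nu$. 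Since $G\beta=\alpha+C\nu+N(\nu)$, the last two terms become $-\alpha-C\nu-2N(\nu)$. The delicate point is to combine this $-2N(\nu)$ with the product-rule derivative $dN(\nu)\,\ad(u)^*\nu=\pm s\I_0\A_0\mu_0^{-1}\bigl(\ad(\mu_C^{-1}\ad(u)^*\nu)^*\nu+\ad(\mu_C^{-1}\nu)^*\ad(u)^*\nu\bigr)$ of the quadratic map \eqref{eS1:defN}: the $\ad(\mu_C^{-1}\nu)^*\nu$ piece absorbs the $-2N(\nu)$ contribution, producing precisely the bracketed factor $\ad(\mu_C^{-1}\nu)^*(\ad(u)^*\nu\pm2\nu)$ of \eqref{eS1:LPdissXPL}. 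Tracking the $\pm/\mp$ signs and the quadratic nonlinearity of $N$ through this collapse is the main obstacle; everything else is substitution of definitions.
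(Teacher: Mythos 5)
Your proposal is correct and follows the same route as the paper: the equivalences $(2)\Leftrightarrow(3)$ and $(1)\Leftrightarrow(3)$ are exactly the pullback/closed-loop computations already assembled in the text, and the only new content is the explicit form of the $\alpha$-equation, which the paper likewise reduces to the identity $dN(\nu)\,\eta=\pm s\I_0\A_0\mu_0^{-1}\bigl(\ad(\mu_C^{-1}\eta)^*\nu+\ad(\mu_C^{-1}\nu)^*\eta\bigr)$. Your sign bookkeeping (cancellation of the $\ad(X)^*\beta$ terms, $-G\beta-N(\nu)=-\alpha-C\nu-2N(\nu)$, and the absorption of $-2N(\nu)$ into the bracket $\ad(\mu_C^{-1}\nu)^*(\ad(u)^*\nu\pm2\nu)$) checks out and is in fact more detailed than the paper's own one-line proof.
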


\begin{proof}
It only remains to note  that equation~\eqref{eS1:LPdissXPL} follows from \eqref{eS1:LPdiss}, \eqref{eS1:tildeU}, \eqref{eS1:defUdiss}, and since \eqref{eS1:defN} implies
\[
 dN(\nu)\,\eta 
 = \pm s\I_0\A_0\mu_0^{-1}
  \Big(
      \ad( \mu_C^{-1}\eta )^*\nu
      + \ad( \mu_C^{-1}\nu )^*\eta
     \Big)
\]
for $\nu,\eta\in\du^*$.
\end{proof}

In particular, the stability analysis of an equilibrium $x_e = (\nu_e,\alpha_e)$ of \eqref{eS1:LPdissXPL} can be carried out equivalently for the equilibrium $z_e = \Phi(x_e) = (\nu_e,\beta_e)$ of the IDA system \eqref{eS1:pHS}. If $\beta_e=0$ then stability, or asymptotic stability, of $z_e$ can be addressed according to Remark~\ref{rem:stable}. 

The map $G$ can be set to $G=1$. Letting $C = k\I_0\A_0\mu_0^{-1}$ renders $\mu_C$ symmetric and invertible for small $k\in\R$.

\section{Satellite with rotor}\label{sec2:sat}

The stabilization of a satellite by means of an internal rotor, modelled by a carrier rigid body with a wheel attached to one of the principal axes, has been treated systematically for the first time in \cite{K85} and later in \cite{WK92,BKMS92,BMS97,BLM01b,BCLMW00}. In these references the rotor is attached to the long axis, and a feedback law for the rotor's speed relative to the axis is constructed such that rotation of the carrier rigid body about the intermediate axis becomes a nonlinearly stable equilibrium. Further, \cite{BCLMW00} construct an additional control which acts dissipatively such that asymptotic stability is achieved. 

In this section we consider the satellite with a rotor attached to the short axis and apply the double bracket IDA construction of Theorem~\ref{thm:FB_IDA} to obtain asymptotic stability of rotation about the middle axis in the full phase space of $(\nu,\alpha)$-variables corresponding to rigid body and rotor angular momenta. 

Choosing the short axis as a base for the attached rotor does not lead to any significant changes since the two cases, long or short, are completely analogous. 
Attaching the rotor to the short (respectively: long) axis leads to an energy minimum (maximum) at the intermediate axis for the controlled dynamics. Thus  asymptotic stability is found by defining the symmetric bracket with the correct sign so that energy either decreases towards the minimum ($s=1$) or increases towards the maximum ($s=-1$). The double bracket dissipation used here is different from the dissipative control considered in \cite{BCLMW00}.

\subsection{Equations of motion for the free system}
The configuration space of the carrier rigid body  is $\D=\SO(3)$ and that of the rotor attached to the short axis is the Abelian Lie group $\G=S^1$. The corresponding Lie algebras are $\du=\R^3$ equipped with the cross product $\ad(u)v = [u,v] = u\times v$ and $\gu=\R$. The total space is denoted, as above, by $\P = \D\times\G$. To trivialize the tangent bundle we choose the left (also called: body) representation, $T\P=\P\times\po = \D\times\G\times\du\times\gu$ via left multiplication, and accordingly for the cotangent bundle,  $T^*\P=\P\times\po^* = \D\times\G\times\du^*\times\gu^*$, where $\po^*$, $\du^*=\R^3$ and $\gu^*=\R$ are the corresponding dual Lie algebras. This means that the top positioned sign in formulas in Section~\ref{sec1:feedback} is the relevant one, i.e.\ $\pm$ becomes $+$. 
 
The carrier body principal moments of inertia are denoted by $I_1<I_2<I_3$ and those of the rotor are $i_1=i_2<i_3$ such that $\lam_1<\lam_2<I_3$ where $\lam_j=I_j+i_j$ for $j=1,2,3$. 

The Hamiltonian of the system is given by the kinetic energy $h_0: \po^*=\R^3\times\R\to\R$, $h_0(\nu,\alpha) = \vv<(\nu,\alpha)^{\top},(\mu_0^{\mathcal{P}})^{-1}(\nu,\alpha)^{\top}>/2$ where $(\cdot)^{\top}$ denotes vector transpose and $\vv<.,.>$ is the standard scalar product. Let $\mu_0 := \diag{\lam_1,\lam_2,I_3}$ be the diagonal matrix with these entries, $\I_0 := i_3$, and $\A_0 := e_3^{\top}: \du=\R^3\to\gu=\R$, $\A_0(u)=u_3$ be the projection onto the third coordinate. Then the kinetic energy metric $\mu_0^{\mathcal{P}}$ can be expressed in Kaluza-Klein form: 
\begin{equation}
    \label{eS2:KKmetric}
    \mu_0^{\mathcal{P}}
    = 
    \left(
    \begin{matrix}
     \lam_1 & & &  \\
      & \lam_2 & &  \\ 
      & & \lam_3 & i_3 \\
      & & i_3 & i_3 
    \end{matrix}
    \right)
    = 
    \left(
    \begin{matrix}
     \mu_0+\A_0^*\I_0\A_0  &  \A_0^*\I_0 \\
     \I_0\A_0 & \I_0 
    \end{matrix}
    \right)
    , \qquad
    (\mu_0^{\mathcal{P}})^{-1}
    = 
    \left(
    \begin{matrix}
     \mu_0^{-1}  &  -\mu_0^{-1}\A_0^* \\
     -\A_0\mu_0^{-1} & \I_0^{-1} + \A_0\mu_0^{-1}\A_0^* 
    \end{matrix}
    \right)
\end{equation}
where $\A_0^* = (e_3^{\top})^{\top} = e_3$. The free equations of motion, i.e.\ without control force, are
\begin{align}
\label{eS2:eom_fr}
 \dot{\nu} &= \ad(u)^*\nu = -u\times\nu,\qquad 
 \dot{\alpha} 
 = 0
\end{align}
where $u = \delta h_0/\delta\nu = \mu_0^{-1}(\nu-\A_0^*\alpha) = (\lam_1^{-1}\nu^1, \lam_2^{-1}\nu^2, I_3^{-1}\nu^3 - \alpha)^{\top}$.

The equilibrium $(\nu_e, \alpha_e) = (e_2,0)$ where $e_2 = (0,1,0)^{\top}$ corresponds to rotation about the middle axis, and this is an unstable equilibrium for \eqref{eS2:eom_fr}.

\subsection{Feedback control}
Following \eqref{eS1:LPdissXPL}, the controlled equations of motion are 
\begin{align}
\label{eS2:eomF1}
 \dot{\nu} &= \ad(u)^*\nu = -u\times\nu\\
 \notag 
 \dot{\alpha} 
 &= \mathcal{U}_{\textup{LP}}(\nu,\alpha) + \widetilde{\mathcal{U}}_{\textup{diss}}(\nu,\alpha) \\
 \notag 
 \mathcal{U}_{\textup{LP}}(\nu,\alpha)  
 &= (C+ dN(\nu))\,u\times\nu    \\
  \notag
  \widetilde{\mathcal{U}}_{\textup{diss}}(\nu,\alpha)
  &= -\alpha - C\nu - 2N(\nu)
\end{align}
where $C: \R^3\to\R$ is linear and $N: \R^3\to\R$ is defined by \eqref{eS1:defN}. Further, $\mu_C = (1+\A_0^*C)^{-1}\mu_0$ should be a symmetric and positive matrix on $\R^3$. Thus $C$ has to be a multiple of $\A_0$,  $C := k\I_0\A_0\mu_0^{-1} = k e_3^{\top} i_3/I_3$ with $k i_3/I_3>-1$, whence 
\begin{equation}
\label{eS2:muC}
    \mu_C
    := \diag{\lam_1,\lam_2,(1+ki_3/I_3)^{-1}I_3}.
\end{equation} 

\begin{remark}[Lie-Poisson matching]
If we consider the case $N=0$ and $\widetilde{\mathcal{U}}_{\textup{diss}}=0$ such that the matching conditions of Proposition~\ref{prop:matching} can be applied then the controlled equations of motion are Lie-Poisson in the closed loop variables $(\nu,\beta) = (\nu, G^{-1}(C\nu+\alpha))$. The corresponding structure consisting of a Kaluza-Klein type metric $\mu_C^{\mathcal{P}}$ and an associated kinetic energy Hamiltonian $h_C$  is provided in Remark~\ref{rem:sat3}.
\end{remark}

\subsection{Asymptotic stabilization via double bracket IDA-PBC}\label{sec:sat_AS}
Consider the feedback controlled system \eqref{eS2:eomF1} with $N$ defined by \eqref{eS1:defN}, that is $N(\nu) = -s\I_0\A_0\mu_0^{-1}\,\mu_C^{-1}\nu\times\nu$ and $s=\pm1$ remains to be specified. We set
\[
 G=1, \qquad C = k\I_0\A_0\mu_0^{-1} = k\frac{i_3}{I_3}e_3^{\top}
\]
and define $\mu_C$ as above. 

The goal is to asymptotically stabilize the motion about the middle axis, i.e.\ to asymptotically stabilize the equilibrium $(\nu_e,\alpha_e) = (e_2,0)$ of the system \eqref{eS2:eomF1}, which is unstable for the free motion \eqref{eS2:eom_fr}. The idea is that $C$ should be used so that $e_2$ becomes the new short axis with respect to the controlled metric $\mu_C$; this holds if, and only if, $(1+ki_3/I_3)^{-1}I_3 < \lam_2$ since this makes $\lam_2$ the `new' largest moment of inertia. To see this systematically, and conclude asymptotic stability of the controlled motion, we use Theorem~\ref{thm:FB_IDA}. 

The analysis is carried out in the variables $(\nu,\beta) = (\nu,\alpha+C\nu+N(\nu)) = \Phi(\nu,\alpha)$. The system defined by \eqref{eS1:pHS} is given by 
\begin{align}
    \label{eS2:pHSsat}
    \dot{\nu}
    &= -\frac{\delta g_C}{\delta\nu}\times\nu 
       + s\Big(\mu_0^{-1}\A_0^*\I_0\frac{\delta g_C}{\delta\beta}\Big)\times\nu 
       + s\Big(
            \mu_0^{-1}\A_0^*\I_0\A_0\mu_0^{-1} (\frac{\delta g_C}{\delta\nu}\times\nu )
        \Big)\times\nu \\ 
\notag
    \dot{\beta}
    &= -s\I_0\frac{\delta g_C}{\delta\beta}
     + s \I_0\A_0\mu_0^{-1}\Big( \frac{\delta g_C}{\delta\nu}\times\nu\Big)
\end{align}
where 
\[
 g_C(\nu,\beta) 
 = \frac{1}{2}\vv<\nu,\mu_C^{-1}\nu> + \frac{s}{2}i_3 \beta^2
\]
Notice that $\Phi(\nu_e,\alpha_e) = (\nu_e,\beta_e)$ with $\beta_e=\alpha_e=0$ is  the equilibrium of \eqref{eS2:pHSsat}.
Casimir functions of $\Pi_C$, $\mathcal{R}_C$ are of the form $K_{\rho}(\nu,\beta) = \rho(|\nu|^2/2)$ for a smooth function $\rho: \R\to\R$. Candidate Lyapunov functions are formed following the energy-Casimir method as $g_C + K_{\rho}$. 

The condition for the first variation 
\[
 D^1(g_C+K_{\rho})(\nu_e,\beta_e)(\delta\nu,\delta\beta)
  = \vv<\lam_2^{-1}\nu_e, \delta\nu> + \rho'(1/2)\vv<\nu_e,\delta\nu>
\]
to vanish at the equilibrium is that $\rho'(1/2) = -\lam_2^{-1}$. The second variation at $(\nu_e,\beta_e)$ is 
\[ 
 D^2(g_C+K_{\rho})(\nu_e,\beta_e)(\delta\nu,\delta\beta)^2 
 = 
 \left
 <\left(\begin{matrix}
     (\lam_1^{-1}-\lam_2^{-1})\delta\nu^1 \\ 
     \rho''(1/2)\delta\nu^2 \\ 
     (I_3^{-1}(1+ki_3/I_3) - \lam_2^{-1})\delta\nu^3
 \end{matrix}\right>  , 
 \left(\begin{matrix}
     \delta\nu^1\\
     \delta\nu^2\\
     \delta\nu^3
 \end{matrix}\right) 
 \right)
 + s i_3^{-1}(\delta\beta)^2 .
\]
Since $\lam_1^{-1}-\lam_2^{-1}>0$, by assumption, the quadratic form  $D^2(g_C+K_{\rho})(\nu_e,\beta_e)$ cannot be negative definite. 
If $k=0$ it is indefinite since $I_3^{-1}-\lam_2^{-1}<0$. 
But it is positive definite if 
\begin{align}
\label{eS2:pd1}
  s&=1\\
\label{eS2:pd2}
  \rho''(1/2) &> 0 \\
\label{eS2:pd3}  
  k &> I_3\frac{I_3 - \lam_2}{i_3\lam_2}
\end{align} 
With these choices, $L(\nu,\beta) := g_C(\nu,\beta)+K_{\rho}(\nu)-g_C(\nu_e,\beta_e)-K_{\rho}(\nu_e)$ is a Lyapunov function for $(\nu_e,\beta_e)$ and \eqref{eS2:pHSsat}: it is strictly positive in a punctured neighborhood of $(\nu_e,\beta_e)$ and $\del L/\del t\le0$ along solutions due to \eqref{eS1:g_C}. Hence $(\nu_e,\beta_e)$ is a nonlinearly stable equilibrium of the dissipative system \eqref{eS2:pHSsat}, and the same holds for $(\nu_e,\alpha_e)$ with respect to the feedback controlled system \eqref{eS2:eomF1}. 

To conclude asymptotic stability, we use LaSalle's invariance principle and note that \eqref{eS1:g_C} implies that $\del L(\nu_t,\beta_t)/\del t = 0$ if, and only if, $N(\nu_t)=0$ and $\beta_t=0$. If $(\nu_t, \beta_t) = (\nu_e+\delta\nu_t,\delta\beta_t)$ is a perturbed solution this yields  $\delta\beta_t=0$.
Furthermore,
\[
 N(\nu_t)
 = \I_0\A_0\mu_0^{-1}\ad(\mu_C^{-1}\nu_t)^*\nu_t
 = -\frac{i_3}{I_3}(\lam_1^{-1}-\lam_2^{-1})\nu_t^1\nu_t^2
 = 0
\]
if, and only if, $\nu_t^1 = \delta\nu_t^1=0$ since $\nu_t^2$ remains close to $1$ by stability.  Since $(\nu_t, \beta_t) = (\nu_e+\delta\nu_t,\delta\beta_t)$ has  to be a solution of \eqref{eS2:pHSsat} the identities $\delta\beta_t=0$ and $\delta\nu_t^1=0$ imply that also $\delta\nu_t^3=0$. Hence by the LaSalle invariance principle $(\nu_t, \beta_t) = (\nu_e+\delta\nu_t,\delta\beta_t)$ tends asymptotically to $(\tilde{\nu}_e,\beta_e)$ with $\tilde{\nu_e} = |\nu_e+\delta\nu_0|e_2$. That is, the axis $\R e_2\times\{\beta_e\}$ is an asymptotically stable manifold for the closed-loop motion \eqref{eS2:pHSsat}. The same is therefore true for $\Phi^{-1}(\R e_2 \times \{\beta_e\}) = \R e_2\times\{\alpha_e\}$ and the controlled motion \eqref{eS2:eomF1}. When $|\nu_e+\delta\nu_0| \neq |\nu_e|$ the perturbed motion $\nu_e+\delta\nu_t$ cannot be expected to asymptotically approach $\nu_e$ since coadjoint orbits, i.e.\ spheres, are preserved in the $\nu$-variable.

\section{Feedback controlled Hall MHD flow}\label{Sec3}
Magnetohydrodynamics (MHD) describes the collective motion of a quasi-neutral plasma fluid such that the magnetic field lines are frozen in. Hall MHD allows for a decoupling between ions and magnetic field but assumes that the latter  is dragged along the electron fluid (\cite{Lighthill60,Holm87,KMT21,O07}).

\subsection{Control problem}\label{subs:problem}
Let $M\subset\R^3$ be a closed domain with smooth boundary, $\del M$. 
The two-fluid Euler-Maxwell system for inviscid and incompressible flow of ions and electrons, labeled by  $l=i,e$,  in the presence of an electromagnetic field $(E,B)$ is given by the Euler equations coupled to the Maxwell equations. That is, 
\begin{align}
\label{eS3:EulMax}
    m^l\frac{Du^l}{Dt}
    &= 
    -\nabla p^l + q^l(E + u^l\times B)\\ 
    \notag
    \textup{div}\,u^l &= 0 \\
    \notag
    \textup{div}\,B &= 0,
    \qquad B = \textup{curl}\, A, \qquad  \textup{div}\,A = 0 \\ 
    \notag
    \textup{div}\,E &= \sigma\\
    \notag
    \dot{A} &= -E \\ 
    \notag
    \textup{curl}\, B &= \mathcal{J}   
\end{align}
where $u^l$ is the species' fluid velocity, $m^l$ is the particle's mass, $p^l$ is the pressure determined by the incompressibility condition and $\frac{Du^l}{Dt} = \dot{u}^l + \vv<u,\nabla>u =  \dot{u}^l + \sum u^i\del_i u$ is the convective derivative. Further, $q^i = e$ and $q^e=-e$ is the electron charge, and $\mathcal{J}$ is the current density. These equations for the electric field $E$ and the magnetic field $B$ are the Maxwell equations without displacement term. We assume charge neutrality, $\sigma=0$, and that the current is a sum of induced and externally controlled components, i.e.
\begin{equation}
    \label{eS3:J}
    \mathcal{J} = eu^i - eu^e  + \mathcal{J}_{\textup{ext}}.
\end{equation}
The Hall MHD description is obtained from the two fluid system by neglecting the electron inertia, $m^e Du^e/Dt =0$. This yields Ohm's law with Hall term: 
\[
 E = -e^{-1}\nabla p^e - u^e\times B
 = 
 -e^{-1}\nabla p^e - \Big(u^i - e^{-1}\textup{curl}\,B + e^{-1}\mathcal{J}_{\textup{ext}}\Big)\times B
\]
where we have substituted $u^e = u^i - e^{-1}\textup{curl}\,B + e^{-1}\mathcal{J}_{\textup{ext}}$. Dropping the superscript $i$ and inserting this equation for $E$ in the system \eqref{eS3:EulMax} leads to the single fluid equations of motion for ion flow:
\begin{align}
    \label{eS3:EulMaxCo}
    \dot{u}
    &=  -\nabla p - \vv<u,\nabla>u + \textup{curl}\,B\times B + \mathcal{F}\\
    \notag
    \dot{A} 
    &= e^{-1}\nabla p^e + \Big(u - e^{-1}\textup{curl}\,B\Big)\times B 
    -  e^{-1}\mathcal{F} \\
    \notag 
    \mathcal{F} &= -\mathcal{J}_{\textup{ext}}\times B
\end{align}
with $B=\textup{curl}\,A$, and $p$ and $p^e$ are determined by $\textup{div}\,u = \textup{div}\,A = 0$. Without external control, $\mathcal{F} = 0$, these are the Hall MHD equations  (\cite{Lighthill60,Holm87,KMT21,O07}).

Let $\X_0(M)$ be the space of smooth vectorfields $X\in\X(M)$ such that $\textup{div}\,X=0$ and $\vv<X,n>|\del M=0$ where $n$ is the outward normal unit vector at the boundary $\del M$. Let $\mathbb{P}: \X(M)\to\X_0(M)$, $X\mapsto X-\nabla\Delta^{-1}\textup{div}\,X$ be the Leray-Hodge-Helmholtz projection ensuring the solenoidal property and the given boundary conditions. Thus \eqref{eS3:EulMaxCo} can be reformulated as 
\begin{align}
    \label{eS3:EulMaxCoP}
    \dot{u}
    &=  \mathbb{P}\Big( 
    -\vv<u,\nabla>u + \textup{curl}\,B\times B + \mathcal{F}\Big)\\
    \notag
    \dot{A} 
    &= 
    \mathbb{P}\Big(\Big(u - e^{-1}\textup{curl}\,B\Big)\times B 
    -  e^{-1}\mathcal{F}\Big)
\end{align}
with $B=\textup{curl}\,A$.

Given an unstable equilibrium $(u_e,A_e)$ of the free Hall MHD system  (i.e.\ $\mathcal{F}=0$) the goal is now to find a feedback law $\mathcal{F}=\mathcal{F}(u,A)$ such that $(u_e,A_e)$ is a stable equilibrium of \eqref{eS3:EulMaxCoP}. It is, however, not insisted that $\mathcal{F}$ should be of the form $-\mathcal{J}_{\textup{ext}}\times B$.

\subsection{Forced Lie-Poisson system}\label{subs:fLP}
Let $\D := \textup{Diff}_0(M) =: \G$ be the infinite dimensional Lie group of volume and boundary preserving diffeomorphisms. The groups $\D$ and $\G$ coincide as spaces but their roles are different, namely as in Section~\ref{sec1:feedback}, hence the distinction will be made throughout. 

The Lie algebras and (regular) dual algebras are $\du = \X_0(M) = \gu$ and $\du^* = \Om^1(M)/d\cinf(M) = \gu^*$ where $\Om^1(M)/d\cinf(M)$ is the regular dual consisting of one-forms modulo exact one-forms. The regular dual is isomorphic to the Lie algebra, and the isomorphism is given by the flat operator (lowering indices) followed by projecting onto the equivalence class modulo exact one-forms, $[\cdot]\circ\flat: \X_0(D)\to\Om^1(M)/d\cinf(M)$, $u\mapsto [u^{\flat}] = [\nu]$ with inverse $[\nu]\mapsto \mathbb{P}\nu^{\sharp} = u$;  
equivalence classes coincide, $[\nu] = [\kappa]$, if, and only if, $\nu-\kappa = df$ for $f\in\cinf(M)$.

The Lie bracket on $\du$ is the negative of the vector field bracket. That is, $\ad(u)v = -\vv<u,\nabla>v + \vv<v,\nabla>u$ for $u,v\in\du$. The coadjoint representation is given by the Lie derivative of one-forms, $\ad(u)^*[\nu] = [L_u\nu] = [di_u\nu + i_ud\nu]$ where $u\in\du$ and $[\nu]\in\du^*$.
Here $i_u$ is the insertion (contraction) operation and $d$ is the exterior derivative.
The adjoint and coadjoint represenations on $\gu$ and $\gu^*$ are given by the same formulas and the notation will also be identical. 

The equivalence class notation will be omitted from now on. E.g., $\nu\in\du^*$ will refer to $[\nu] \in\Om^1(M)/d\cinf(M)$ with the tacit understanding that $\nu$ is only fixed up to an exact one-form. 

Let 
\[
 \mu_0 := \flat: \du\to\du^*,\qquad
 \I_0 := -e^{2}\, \flat\circ\Delta^{-1},\qquad 
 \A_0 := -1
\]  
where $\Delta = -\textup{curl}^2: \X_0(M)\to\X_0(M)$ 
is the vector Laplacian (since $\textup{div}\,X = 0$ for $X\in\X_0(M)$).
The Kaluza-Klein type Hamiltonian function $h_0: \du^*\times\gu^*\to\R$ is defined as
\begin{align}
    h_0(\nu,\alpha)
    &= \frac{1}{2}\int_M||\nu+\alpha||^2\,dx 
     + \frac{1}{2e^2}\int_M||d\alpha||^2 \,dx 
     \\
\notag
    &=
    \frac{1}{2}\int_M
    \left\langle 
        \left(
        \begin{matrix}
         \mu_0^{-1} & -\mu_0^{-1}\A_0^* \\
         -\A_0\mu_0^{-1} & \I_0^{-1} + \A_0\mu_0^{-1}\A_0^*
        \end{matrix}
        \right) 
        \left(
        \begin{matrix}
         \nu \\
         \alpha 
        \end{matrix}
        \right) ,
        \left(
        \begin{matrix}
         \nu \\
         \alpha 
        \end{matrix}
        \right)
    \right\rangle\,dx
\end{align}
which has the same structure as \eqref{eS2:KKmetric}.
Let  $B=\textup{curl}\,A$ with $\textup{div}\,A = 0$. Then the identifications
\[
\nu = u^{\flat} + e A^{\flat}, \qquad
\alpha = - e A^{\flat}
\]
imply that equations \eqref{eS3:EulMaxCo} are equivalent to the system of forced Lie-Poisson equations
\begin{align}
    \label{eS3:fLP1}
    \dot{\nu} 
    &= -\ad(u)^*\nu, 
    \qquad u = \delta h_0/\delta\nu 
    = \mu_0^{-1}(\nu - \A_0^*\alpha) 
    \\
    \notag
    \dot{\alpha} 
    &= -\ad(X)^*\alpha + \mathcal{F}^{\flat} 
    ,\qquad  X = \delta h_0/\delta\alpha 
    = \I_0^{-1}\alpha - \A_0 u
\end{align}
on $\du^*\times\gu^*$.
The proof of this equivalency follows either by direct calculation or, from a structural point of view, by adapting the Hamiltonian structure for Hall MHD found by \cite{Holm87} to the incompressible case.

\subsection{Closed-loop dynamics and double bracket IDA-PBC}\label{subS3:loop} 
Concerning the choice of sign in Section~\ref{sec1:feedback}, the bottom sign is now the relevant one since \eqref{eS3:fLP1} is formulated in the space frame, that is $\pm1$ becomes $-1$.

In order to make use of Theorem~\ref{thm:FB_IDA}, we assume that  $\mathcal{F}^{\flat} = \mathcal{F}(u,A)^{\flat}  = \mathcal{U}_{\textup{LP}}(\nu,\alpha)+\widetilde{\mathcal{U}}_{\textup{diss}}(\nu,\alpha)$ as in definitions \eqref{eS1:tildeU} and \eqref{eS1:defUdiss} with $G=1$. The map $C: \du^*\to\gu^*$ and the sign $s=\pm1$ are to be defined in accordance with the control target. It follows that, in the new variables $(\nu,\beta) = \Phi(\nu,\alpha) = (\nu,\alpha+C\nu+N(\nu))$, system \eqref{eS3:fLP1} is equivalent to to the dissipative port controlled Hamiltonian system \eqref{eS1:pHS},   given by
\begin{align}
\label{eS3:lp1}
    \dot{\nu} 
    &= 
    -\ad(\mu_C^{-1}\nu)^*\nu 
    + \ad(\mu_0^{-1}\A_0^*\beta)^*\nu 
    - \ad(\mu_0^{-1}\A_0^*N(\nu))^*\nu  \\ 
\label{eS3:lp2}
    \dot{\beta} 
    &= 
    -N(\nu) - \beta 
\end{align}
with $N(\nu) = -s\I_0\A_0^*\mu_0^{-1}\ad(\mu_C^{-1}\nu)^*\nu$.  Definition~\eqref{eS1:def_g_C} yields 
\begin{equation}
    \label{eS3:g_C}
    g_C(\nu,\beta)
    = 
    \frac{1}{2}\vv<\nu,\mu_C^{-1}\nu > 
    +\frac{s}{2}\vv<\beta, \I_0^{-1}\beta> 
    = 
    \frac{1}{2}\int_M\vv<\nu,\mu_C^{-1}\nu > \,dx
    + \frac{s}{2e^2}\int_M ||d\beta||^2\,dx 
\end{equation}
where the angle bracket is used to denote, both, the $L^2$- and the pointwise Euclidean pairing. 

Depending on whether the goal is to  stabilize an energy minimum or maximum the sign is chosen as $s=1$ or $s=-1$, respectively.

\subsection{Controlled stability of planar shear flow}\label{Sec4:shear} 
Consider now the Hall MHD system~\eqref{eS3:EulMaxCoP} for the case of flow along the horizontal strip $M = \{(x,y): 0\le x\le L\pi , 0\le y \le W\pi  \}$, where length, $L\pi$, and width, $W\pi$, are fixed, in the presence of a vertical magnetic field, $B = b(x,y)\,e_3 = \textup{curl}\,A$. That is, 
\begin{align}
 \label{eS4:u}
 \dot{u}
 &= 
 -\nabla (p+\frac{1}{2}b^2)
 -\vv<u,\nabla u> + \mathcal{F} 
 = \mathbb{P}\Big( -\vv<u,\nabla u> + \mathcal{F} \Big) \\
  \notag 
 \dot{A}
 &=
  -e^{-1}\nabla (p^e-\frac{1}{2}b^2)
  + u\times B -  e^{-1}\mathcal{F} 
 = \mathbb{P}\Big( u\times B - e^{-1}\mathcal{F} \Big)
\end{align}
where $\mathbb{P}$ is the Leray-Hodge-Helmholtz projection ensuring  $\textup{div}\,u = \textup{div}\,A = 0$ and the boundary conditions. 

\begin{remark}
Planar incompressible Hall MHD flow coincides with ordinary MHD since $\textup{curl}\, B\times B = -\nabla b^2/2$ for a vertical field. But since \eqref{eS4:u} is derived as the two-dimensional version of \eqref{eS3:EulMaxCoP} it will also be referred to as controlled Hall MHD flow. 
\end{remark}

Let $0\le\gamma<1$ and consider equilibrium solutions, $u_e$ and $A_e$, of the uncontrolled equations, \eqref{eS4:u} with $E_{\textup{ext}}=0$, given by 
\begin{equation}
    \label{eS4:u_e}
    u_e = 
    \left( 
    \begin{matrix}
     \sin( y) \\
     0 
    \end{matrix}
    \right),  
    \qquad
    A_e = 
    -\gamma e^{-1}\left( 
    \begin{matrix}
     \sin( y) \\
     0 
    \end{matrix}
    \right) .
\end{equation} 
Since $\textup{curl}\,B\times B = -\nabla b^2/2$ is absorbed by the pressure term in the Euler equation the presence of an uncontrolled vertical magnetic field does not alter the stability properties of incompressible horizontal flow (\cite[Section~6.1]{HMRW85}).  Hence, for $\gamma=0$, this  equilibrium is stable if $\lam_1 > 1$ where $\lam_1 = 1/L^2 + 1/W^2$ is the minimal eigenvalue of $-\Delta$ in the domain $M$; it is unstable otherwise.  See \cite[Section~3.3]{HMRW85}. Given an arbitrarily long channel, the goal is to find $\gamma$ and $\F(u,A)$ such that $(u_e,A_e)$ is stable. 


Define $\Delta_{\gamma} := \del_x\del_x + (1-\gamma)\del_y\del_y$ and $\mu_C: \du\to\du^*$ by 
\begin{equation}
    \label{eS4:muC}
    \mu_C\left(\begin{matrix}
        v^1 \\ 
        v^2 
    \end{matrix}\right)
    = 
    \mu_0\Delta^{-1} \Delta_{\gamma}\left(\begin{matrix}
        v^1 \\ 
        v^2 
    \end{matrix}\right)
    = 
    \mu_0\mathbb{P}\left(\begin{matrix}
        (1-\gamma)v^1 \\ 
        v^2 
    \end{matrix}\right)
\end{equation}
where $\Delta$ and $\Delta_{\gamma}$ operate component wise on vector fields. Because of $\mu_C^{-1} = \mu_0^{-1}(1+\A_0^* C)$ and $\A_0=-1$ this implies $C = 1 - \mu_0\Delta_{\gamma}^{-1}\Delta\mu_0^{-1}$. 
Notice that 
\[
 C\Big( \nu^1(y)\,dx  \Big) 
 = -\frac{\gamma}{1-\gamma}\nu^1(y)\,dx . 
\]

\begin{remark}
The idea behind this choice of metric is that it makes the channel appear shorter as $\gamma$ approaches $1$ from below. Measured with respect to $\mu_C$ the channel length is $\sqrt{1-\gamma}L\pi$.    
\end{remark}

Consider now the corresponding feedback controlled system \eqref{eS4:u}  with control force 
\begin{align}
\label{eS4:mU}
    \mathcal{f}(u,A)
    &= 
    \mu_0^{-1}\Big(
     \mathcal{U}_{\textup{LP}}(\nu,\alpha)
     +\widetilde{\mathcal{U}}_{\textup{diss}}(\nu,\alpha)
    \Big)
    =
    \mu_0^{-1}\Big(
     \mathcal{U}_{\textup{LP}}(\nu,\alpha)
     +\mathcal{U}_{\textup{diss}}(\nu,\alpha+C\nu+N(\nu))
    \Big)\\
    \notag 
    &= 
    \mu_0^{-1}\Big(
      C\ad(u)^*\nu - dN(\nu)\,\ad(u)^*\nu 
     + 2 N(\nu) + \alpha + C\nu  
    \Big) 
\end{align}
The equations of motion in the $(\nu,\beta)=\Phi(\nu,\alpha)$ variables are \eqref{eS3:lp1}-\eqref{eS3:lp2}. The equilibrium is mapped to
\[
 (\nu_e,\beta_e) = \Phi(\nu_e, \alpha_e) = (\mu_c u_e, 0) 
\]
since $N(\nu_e)=0$ and $\alpha_e = -C\nu_e$. 

To show nonlinear stability of $(\nu_e,\beta_e)$ we use the energy-Casimir method. 
The first variation of $g_C$ at the equilibrium is 
\begin{equation}
    D^1g_C(\nu_e,\beta_e)(\delta \nu,\delta\beta)
    = \vv<\delta\nu, \mu_C^{-1}\nu_e> 
    = \vv<\delta\nu, u_e> 
    = -\int_M \psi_e(\star d \delta\nu)\,dx dy
    = -\int_M \psi_e \delta\om\,dx dy
\end{equation}
where $\psi_e$ is the stream function of $u_e$, $u_e = \nabla^s\psi_e = (-\del_y\psi_e,\del_x\psi_e)$, and $\star$ is the Hodge star operator, $\star f(x,y)dx\wedge dy = f(x,y)$, and $\star d \delta\nu = \delta\om$ is the vorticity associated to $\delta\nu$.
This expression is nonzero whence a Casimir function, $K_C$, is needed such that $D^1(g_C+K_C)(\nu_e,\beta_e) = 0$. A suitable choice is 
\begin{equation}
    K_C(\nu,\beta) 
    = K_C(\nu) 
    = -\frac{1}{2(1-\gamma)}\int_M(\star d \nu)^2\,dxdy
\end{equation}
where $\star d \nu$ is the vorticity associated to $\nu$. Vorticity integrals are Casimir functions on $\du^*$, hence $K_C$ is constant along solutions of \eqref{eS3:lp1}. 
 Coadjoint orbits in $\du^*$ are of the form $\mathcal{O}_{\nu_e} = \{\nu\in\du^*: *d\nu = (*d\nu_e)\circ\phi,\; \phi\in\D\}$, i.e., those elements whose vorticities are related by an area preserving diffeomorphism. 
 
Moreover, $\om_e = \star d \nu_e = \star d \mu_C\nabla^s \psi_e = \Delta_{\gamma}\psi_e$ and $\psi_e$ are related by $\om_e = -(1-\gamma)\psi_e$. Therefore,
\begin{equation}
    D^1(g_C+K_C)(\nu_e,\beta_e)(\delta \nu,\delta\beta)
    = -\int_M \psi_e \delta\om\,dx dy
    - \frac{1}{1-\gamma}\int_M \om_e \delta\om\,dx dy
    = 0.
\end{equation}
Let $\delta\psi$ be the stream function associated to a perturbation $\delta\nu$, i.e.\ $\delta\nu = \mu_C\nabla^s\delta\psi$, and $\delta\om = \star d \delta\nu = \Delta_{\gamma}\delta\psi$ the vorticity. 
The second variation is now estimated by the Poincar\'e inequality as
\begin{align*}
    D^2(g_C+K_C)(\nu_e,\beta_e)((\delta\nu,\delta\beta),(\delta\nu,\delta\beta))
    &=
    \vv<\delta\nu, \mu_C^{-1}\delta\nu> 
    - \frac{1}{1-\gamma}\int_M (\delta\om)^2\,dx dy
    +s \vv<\delta\beta,\I_0^{-1}\delta\beta>  \\ 
    &= 
    - \int_M \delta\psi\Delta_{\gamma}\delta\psi \,dxdy
    - \frac{1}{1-\gamma}\int_M (\delta\om)^2\,dx dy
    +s \vv<\delta\beta,\I_0^{-1}\delta\beta>  \\ 
    &\le 
    \lam_1(\gamma)^{-1}\int_M (\Delta_{\gamma}\delta\psi)^2 \,dxdy
    - \frac{1}{1-\gamma}\int_M (\delta\om)^2\,dx dy
    +s \vv<\delta\beta,\I_0^{-1}\delta\beta>
\end{align*}
where 
\[ 
 \lam_1(\gamma) = \frac{1}{L^2} + \frac{1-\gamma}{W^2}
\]
is the smallest eigenvalue of $-\Delta_{\gamma}$ on $M$. 
The quadratic form $D^2(g_C+K_C)(\nu_e,\beta_e)$ may be indefinite for $\gamma=0$ due to the Casimir contribution. However, it is negative definite if 
\begin{align}
\label{eS3:stC}
 \frac{1}{(1-\gamma)L^2} + \frac{1}{W^2}  &> 1     \\
 \notag 
 s &= -1
\end{align}
With these choices, $L_C(\nu,\beta) := -g_C(\nu,\beta) - K_C(\nu) + g_C(\nu_e,\beta_e) + K_C(\nu_e)$ is a  Lyapunov function: it is positive in a punctured neighborhood of $(\nu_e,\beta_e)$ and decreasing along solutions. 
Let $(\nu_t, \beta_t) = (\nu_e+\delta\nu_t, \beta_e+\delta\beta_t)$ be a perturbed solution of \eqref{eS3:lp1}-\eqref{eS3:lp2}. It follows that 
\[
 L_C(\nu_e+\delta\nu_0,\beta_e+\delta\beta_0) \ge L_C(\nu_t, \beta_t)
 \ge \lam_1(\gamma)^{-1}\Big(\frac{1}{(1-\gamma)L^2}+\frac{1}{W^2}-1\Big)\int_M(\delta\om_t)^2
 + \int_M(\nabla\delta\beta_t)^2
\]
and hence $(\nu_e,\beta_e)$ is a nonlinearly stable equilibrium, with respect to the Sobolev $H^1$ norm, when conditions~\eqref{eS3:stC} are satisfied.

Moreover, \eqref{eS1:g_C} implies that 
\[
 \frac{\del}{\del t}L_C(\nu_t,\beta_t)
 = -\frac{\del}{\del t}g_C(\nu_t, \beta_t)
 = -||N(\nu_t)||_{\mathbb{I}_0^{-1}}^2 -||\delta\beta_t||_{\mathbb{I}_0^{-1}}^2
\]
converges to $0$ since $L_C$ is nonnegative and decreasing along solutions. Hence $N(\nu_t)\to0$ and  $\beta_t\to\beta_e=0$ asymptotically in the $H^1$-norm. It is not obvious that this implies also \revise{convergence of $\nu_t$ to an equilibrium $\widetilde{\nu}_e$,} 
hence asymptotic stability is not concluded. 

However, \revise{even if $\nu_t\to\widetilde{\nu}_e$} it need not be the case that $\delta\om_t\to0$, \revise{i.e.\ $\widetilde{\nu}_e = \nu_e$}: the reason for this is the same as in the satellite example in Section~\ref{sec2:sat}, perturbations to different coadjoint orbits in $\du^*$ cannot be restored since this structure is preserved by the dynamics.

%
 

\section{Comparison to CL and IDA-PBC techniques}
This paper rests on ideas from CL theory and IDA-PBC techniques, it therefore makes sense to compare the results to these two approaches. 

Concerning CL theory, and specifically the satellite example, we note that  \cite{BMS97} conclude stability of the controlled satellite by showing that it is equivalent to a Hamiltonian system, up to a certain factor (cf.\ sentence immediately below \cite[Equ.~(3.7)]{BMS97}). In the present approach, there is no need to introduce such a factor ex post due to the isomorphism $G$. Thus in contrast to `classical' CL theory, the matching construction of Proposition~\ref{prop:matching} yields a direct identification of the feedback controlled and the corresponding Lie-Poisson systems. However, for the conclusion of stability the manner of this identification does, at least for the satellite example, not make any difference. 

We emphasize that stability of the controlled satellite was shown in  Section~\ref{sec:sat_AS} with respect to the feedback law \eqref{eS1:LPdissXPL} which \emph{does not} satisfy the matching conditions. Thus this control approach is independent of matching conditions which need not always have a (non-trivial) solution. 

The same holds for the conclusion of asymptotic stability. In \cite{BCLMW00} this was shown by first proving stability via the matching construction and then adding a dissipative control to achieve asymptotic stability. Again, the present approach is quite different since it is not only independent of matching conditions but also because dissipation is included by the double bracket construction (via the nonlinear map $N$ in equation~\eqref{eS1:LPdiss1}) and a restoring force in the closed loop variable $\beta$ (via $\mathcal{U}_{\textup{diss}}$ in equation~\eqref{eS1:LPdiss2}). 

Concerning IDA-PBC approaches to Hamiltonian systems, it is not a priori clear how to include symmetry arguments in the construction of the interconnection and damping structures. In the present context this rests on the geometric formulation of the control force in Proposition~\ref{prop:Cons} which encodes the symmetries ab initio in the desired manner. From this we arrive at the closed loop equations \eqref{eS1:pH1}-\eqref{eS1:pH2} constructively such that the IDA-PBC structure may be read off since these equations are linear in the first order derivatives of the quadratic function $g_C$ defined in \eqref{eS1:def_g_C}. 

Thus, while the construction of Theorem~\ref{thm:FB_IDA} is covered by the IDA-PBC methodology from an abstract point of view, the benefit of the present geometric approach is that it is constructive and makes use of available symmetries. 

Moreover, the coordinate independence of the geometric formulation allows to apply Theorem~\ref{thm:FB_IDA} to the infinite dimensional problem of feedback stabilization of Hall MHD flow (Section~\ref{Sec3}). This system has been  treated previously neither from the CL nor from the IDA-PBC point of view, although there are many works concerned with feedback control and (linear) stability of (Hall) MHD flow (e.g., \cite{HT04,GK06,XSVK08,VSK09,Tassi22}). The results of this paper are restricted to Lie-Poisson systems defined on direct product Lie algebras. A superficial inspection shows that the derivation of Theorem~\ref{thm:FB_IDA} depends only on structural arguments which may also be formulated for $\G$-invariant Hamiltonian systems on the cotangent bundle of a general principal fiber bundle $\P\to\D$ with structure group $\G$, when the Hamiltonian is associated to a $\G$-invariant Riemannian metric on $\P$ and there exists natural principal bundle connection specifying the control directions. Thus it may be hoped that the approach can be generalized such that other examples can be treated in a similar manner. 

\vspace{1cm}
\textbf{Acknowledgements.} 
The reviewer reports are gratefully acknowledged. 

\vspace{1cm}
\textbf{Declaration.}
The author has no competing interests to declare that are relevant to the content of this article.

\end{document}